\documentclass[11pt,english]{article}
\usepackage[english]{babel}
\usepackage[margin=1in]{geometry}
\usepackage{hyperref}
\hypersetup{
     colorlinks=true,
     linkcolor=blue,
     citecolor=blue,
 }
\usepackage{bm,xfrac,amsmath,amsthm,amssymb,soul, comment, xspace}
\setlength{\parskip}{0.75ex}
\usepackage{amsthm,tabularx,hyperref,amssymb,nicefrac,amsmath,multirow}
\usepackage{comment,amsfonts} 
\usepackage{epsfig} 
\usepackage{latexsym,nicefrac,bbm}
\usepackage{tabularx} 
\usepackage{hyperref} 

\usepackage{enumitem}
\usepackage{booktabs}
\usepackage{color,colortbl}

\newcommand{\nfrac}{\nicefrac}

\long\def\symbolfootnote[#1]#2{\begingroup%
\def\thefootnote{\fnsymbol{footnote}}\footnote[#1]{#2}\endgroup}

\newcommand{\CA}{\mbox{${\mathcal A}$}}
\newcommand{\CR}{\mbox{${\mathcal R}$}}

\newcommand{\CG}{\mbox{${\mathcal G}$}}
\newcommand{\CF}{\mbox{${\mathcal F}$}}
\newcommand{\CP}{\mbox{${\mathcal P}$}}
\newcommand{\CM}{\mbox{${\mathcal M}$}}

\newcommand{\zero}{\mbox{\boldmath $0$}}

\newcommand{\li}{{\lambda_i}}
\renewcommand{\tt}{\mbox{\boldmath $t$}}

\newcommand{\yy}{\mbox{\boldmath $y$}}
\newcommand{\zz}{\mbox{\boldmath $z$}}

\newcommand{\pp}{\mbox{\boldmath $p$}}

\newcommand{\pgamma}{\mbox{\boldmath $\gamma$}}
\newcommand{\plambda}{\mbox{\boldmath $\lambda$}}

\newcommand{\pdelta}{\mbox{\boldmath $\delta$}}

\newcommand{\xx}{\mbox{\boldmath $x$}}

\newcommand{\dd}{\mbox{\boldmath $d$}}
\renewcommand{\AA}{\mbox{\boldmath $A$}}
\renewcommand{\ss}{\mbox{\boldmath $s$}}



\newcommand{\CS}{\mathcal{S}}

\newcommand{\R}{\mathbb{R}}

\newcommand{\Rplus}{\mathbb{R}_+}


\newcommand{\ra}{\rightarrow}

\newcommand{\x}{\mbox{\boldmath $x$}}

\newcommand{\p}{\mbox{\boldmath $p$}}
\newcommand{\q}{\mbox{\boldmath $q$}}

\newcommand\Z{\rule{0pt}{3.2ex}}       
\newcommand\Y{\rule{0pt}{2.5ex}}       
\newcommand\B{\rule[-2.2ex]{0pt}{0pt}} 
\newcommand\M{\rule[-1.5ex]{0pt}{0pt}} 
\definecolor{Gray}{gray}{0.9}

\usepackage{float}

\newcommand{\ddelta}{\mbox{\boldmath $\delta$}}
\newcommand{\bbeta}{\mbox{\boldmath $\beta$}}
\newcommand{\ER}{\mbox{$\exists \mathbb{R}$}}

\newtheorem{theorem}{Theorem}[section]
\newtheorem{lemma}[theorem]{Lemma}

\newtheorem{corollary}[theorem]{Corollary}

\newtheorem{remark}[theorem]{Remark}

\newtheorem{definition}[theorem]{Definition}

\title{Settling the Complexity of Arrow-Debreu Markets under Leontief and PLC Utilities, using the Classes FIXP and $\ER$\thanks{Supported by NSF Grants CCF-1942321 (CAREER), CCF-1750436 (CAREER), and CCF-1815901. 
A preliminary version of this paper appeared at the 49th Symposium on Theory of Computing (STOC 2017)~\cite{GargMVY17}.}}
\author{Jugal Garg\thanks{University of Illinois at Urbana-Champaign. \texttt{jugal@illinois.edu}}
\and  Ruta Mehta\thanks{University of Illinois at Urbana-Champaign. \texttt{rutameht@illinois.edu}} 
\and Vijay V. Vazirani\thanks{University of California, Irvine. \texttt{vazirani@ics.uci.edu}}
\and  Sadra Yazdanbod\thanks{Google, New York. \texttt{sadra.yazdanbod@gmail.com}}}  

\begin{document}
\date{}
\maketitle

\begin{abstract}
This paper resolves two of the handful of remaining questions on the computability of market equilibria, a central theme within algorithmic game theory (AGT). Our results are as follows:

\begin{enumerate}
	\item We show FIXP-hardness of computing equilibria in Arrow-Debreu markets under Leontief utility functions, and Arrow-Debreu markets under linear utility functions and Leontief production sets. We note that these are the first FIXP-hardness results ever since the introduction of the class FIXP \cite{EY07} and the hardness of 3-Nash established therein.
	\item
	We note that for the problems stated above, the corresponding results showing membership in FIXP were established after imposing suitable sufficiency conditions to render the problems total, as is customary in economics. However, if all instances are under consideration, then in both cases we prove that the problem of deciding if a given instance admits an equilibrium is $\ER$-complete, where $\ER$ is the class of decision problems which can be reduced in polynomial time to Existential Theory of the Reals. 

\item
For Arrow-Debreu markets under Leontief utility functions and a constant number of agents, we give a polynomial time algorithm for computing an equilibrium. This settles part of an open problem of \cite{DK08}.
\end{enumerate}

We note that PLC utilities are about the most general utilities of interest in economics and several fundamental utility functions studied within AGT are special cases of it.
Several important problems, which have been shown to be in FIXP, are waiting for proofs of FIXP-hardness. In this context, our technique of reducing from 3-Nash to Multivariate Polynomial Equations and then to the problem is likely to be useful in the future.
\end{abstract}

\section{Introduction}

A central theme of algorithmic game theory (AGT), ever since its inception two decades ago, has been to determine the computational complexity of equilibrium problems. These come in two broad categories: game theoretic equilibria and market equilibria. Whereas the first category is dominated by the solution concept of Nash equilibrium, the second addresses a rather large collection of fundamental market models and utility functions.

This paper resolves two of the handful of remaining questions in the second category, namely establishing hardness of computing equilibria for Leontief and piecewise-linear concave (PLC) utility functions for the Arrow-Debreu market model, with and without production. In economics, it is customary to assume that utility functions are non-separable concave, not only because of their generality but also because they model the key property of decreasing marginal utilities. Since we wish to use a finite precision model of computation, we restrict attention to PLC utility functions; clearly, by making the pieces fine enough, we can obtain a good approximation to the original utility function.

Several utility functions studied over the years in AGT are special cases of PLC utilities, e.g., linear, additively separable piecewise-linear concave (SPLC) and Leontief. It should not come as a  surprise therefore that even within the two questions studied here, the picture is very rich. We describe this next, after stating some fundamental facts about the complexity of equilibrium problems. 

Equilibrium problems are {\em total} in that every {\em valid} instance admits a solution, i.e., an equilibrium. This is in stark contrast with NP-complete problems whose instances may not admit a solution and in fact simply deciding if a solution exists is NP-hard. Two novel classes have been defined for capturing the complexity of total problems, namely PPAD, defined in \cite{pap}, and FIXP, defined in \cite{EY07}. These two classes have captured the complexity of almost all prominent problems. A prerequisite for using the first class is that the problem must admit a solution using rational numbers; the second class is useful if the problem has instances having  only irrational, though algebraic, solutions. For instance, whereas {\em 2-Nash}, i.e., 2-player Nash equilibrium, always admits rational equilibria and is PPAD-complete \cite{DGP,CDT}, {\em $k$-Nash}, i.e., $k$-player Nash equilibrium, for $k \geq 3$, has instances with only irrational equilibria, and this problem is FIXP-complete \cite{EY07}. Interestingly enough, an irrational example for 3-Nash was given by Nash himself \cite{nash}.   

A basic difference between Nash equilibrium and market equilibrium is that whereas in the former {\em every} instance admits an equilibrium, in the latter this is not the case. Economists have rendered the latter total by giving suitable {\em sufficiency conditions}; any instance satisfying these conditions admits an equilibrium. As a consequence, for market equilibrium, two computational problems naturally arise: one dealing with the decision problem of determining if an instance admits an equilibrium and the other dealing with the complexity of solving the total problem, i.e., after imposing a suitable sufficiency condition.

Market equilibrium under Leontief utilities exhibits an intriguing set of possibilities. Andrew Mas-Colell gave an example having only irrational equilibria, even for the Fisher market model, which is a special case of the Arrow-Debreu model; this was reported in \cite{Eaves}. Despite this fact, equilibrium under Leontief utilities for the Fisher model can be computed efficiently to any desired accuracy, since it admits a convex program \cite{eisenberg}; furthermore, unlike other market equilibrium problems, this problem is total. In contrast, for the Arrow-Debreu model, this problem is PPAD-hard \cite{CSVY}; see Section \ref{sec.past} for details of this result.  

Observe that as a consequence of Mas-Colell's irrational example, we know that the right class for capturing the complexity of Leontief utilities for the Arrow-Debreu model should be FIXP and not PPAD. Membership in FIXP for Leontief utilities was established in \cite{yannakakis}. Subsequently, for PLC utilities and polyhedral production sets, membership in FIXP was established in \cite{GargMV16}. It should be noted that each of these results imposes a suitable sufficiency condition in order to render the problem total. 

In this paper, we prove FIXP-hardness for Arrow-Debreu markets under Leontief utility functions, and also for Arrow-Debreu markets under linear utility functions and Leontief production sets.  As corollaries, we obtain FIXP-hardness for Arrow-Debreu markets under PLC utilities and for Arrow-Debreu markets under linear utility functions and polyhedral production sets. We note that the class FIXP does not yield easily to proofs of membership and hardness. Indeed, ours are the first FIXP-hardness results after 3-Nash \cite{EY07}. As discussed in Section \ref{sec.discuss}, several prominent problems, which have already been shown to be in FIXP, are waiting for FIXP-hardness proofs.

As a consequence of the results stated above, the entire computational difficulty of Arrow-Debreu markets under PLC utility functions lies in the Leontief utility subcase. This is perhaps the most unexpected aspect of our result, since Leontief utilities are meant only for the case that goods are perfect complements, whereas PLC utilities are very general, capturing not only the cases when goods are complements and substitutes, but also arbitrary combinations of these and much more; see Section \ref{sec.ad} for definitions.

Next, we need to clarify a subtle fact about our hardness results; we will do so by drawing an analogy with NP-hardness results. Let $\Pi$ be a problem that is known to be in NP. By establishing NP-hardness of $\Pi$, one is accomplishing two tasks simultaneously. First, one is proving that if $\Pi$ is computationally easy, then so is every problem in NP. Second, $\Pi$  can be used for establishing  NP-hardness of other problems, by reducing from it. 

Our hardness of Leontief utilities accomplishes the first task but not the second. Since we reduce from 3-Nash, an algorithm for Leontief markets would yield an algorithm for 3-Nash. Let $S$ be the set of instances satisfying the chosen sufficiency condition for Leontief markets; clearly, all instances in $S$ admit equilibria. Observe that in order to accomplish the second task, our reduction would need to map {\em all} 3-Nash instances to instances in $S$. If a subset of 3-Nash is mapped outside $S$, then this total version of Leontief equilibrium may have consisted of ``easy'' instances only and hence our hardness proof did not accomplish the second task. It turns out that our reduction does not map all 3-Nash instances to instances in $S$, and we believe arranging this for a simply-stated, natural sufficiency condition is extremely non-trivial. At the same time, in the Remark appearing at the end of Section \ref{sec.res}, we do give a sufficiency condition under which both problems can be seen to be FIXP-complete; however, this condition does not have a simple description. On the positive side, as discussed in Section \ref{sec.discuss}, our paper does provide a novel avenue to establishing new FIXP-hardness results. 

Note that corresponding to each Leontief-based and PLC-based problem which was shown to be in FIXP, a suitable sufficiency condition was imposed to render the problem total. As discussed above, on removing the sufficiency condition, we obtain a new type of problem, namely of determining the complexity of the corresponding decision problem of whether a given instance admits an equilibrium. 

The second type of results in this paper address these problems. We show that all these problems are $\ER$-complete, where $\ER$ is the class of decision problems which can be reduced in polynomial time to Existential Theory of the Reals. In particular, this involves proving membership  of these problems in $\ER$ and showing $\ER$-hardness. 

Finally, our paper has a third type of result, which is of a positive nature. For Arrow-Debreu markets under PLC utilities, a polynomial time algorithm for finding an equilibrium was given in \cite{DK08}, provided the number of goods is a constant. They used the technique of algebraic cell decomposition using, in particular, ideas going back to \cite{Basu1995}. \cite{DK08} stated the open problem of obtaining a polynomial time algorithm for finding an equilibrium if the number of agents is a constant. We show how to handle the subcase of Leontief utilities, hence settling a part of this open problem.

\subsection{Previous results on computability of market equilibria}
\label{sec.past}

The first utility functions to be studied were linear. Once polynomial time algorithms were obtained for markets under such functions \cite{DPS,DPSV,DV1,JMS,GK,JainAD,Ye-AD,orlin,vegh,DM,DuanGM16,GargV19} and certain other cases \cite{CPV,JV,DK08,GKV,GargMSV15}, the next question was settling the complexity of Arrow-Debreu markets under additively separable piecewise-linear concave (SPLC) utility functions. This problem was shown to be complete for PPAD \cite{Chen.plc,VY}.  Also, when all instances are under consideration, the problem of deciding if a given SPLC market admits an equilibrium was shown to be NP-complete \cite{VY}.  The notion of SPLC production sets was defined in \cite{GargVaz} and Arrow-Debreu markets under such production sets and linear utility functions were shown to be PPAD-complete in the same paper. 

Previous computability results for Leontief utility functions were the following: In contrast to our result, Fisher markets under Leontief utilities admit a convex program \cite{Eisenberg61} and hence their equilibria can be approximated to any required degree in polynomial time \cite{BV,BSS}.  Arrow-Debreu markets under Leontief utilities were shown to be PPAD-hard \cite{CSVY} by reducing 2-Nash to a special case called ``pairing economy'' in which agents are partitioned into two sets, each agent brings one unit of a distinct good, and agents in each set desire the goods of the other set only. For this case, equilibria are rational. However, in general they are irrational for Leontief markets, as was shown by Mas-Colell and reported in \cite{Eaves}. We note that the two complexity classes PPAD and FIXP appear to be quite disparate -- whereas solutions to problems in the former are rational numbers, those to the latter are algebraic numbers. And whereas the former is contained in function classes NP $\cap$ co-NP, the latter lies somewhere between P and PSPACE, and is likely to be closer to the harder end of PSPACE \cite{yannakakis}.

Leontief utilities are a limiting case of constant elasticity of substitution (CES) utilities \cite{mas}. 
Finding an approximate equilibrium under CES was shown to be PPAD-complete~\cite{CPY}. An exact equilibrium, in this case, was shown to be in FIXP~\cite{CPY}, but the question of whether it is FIXP-hard is still unresolved.

\subsection{Technical contributions}\label{sec.tech}
Perhaps the most elementary way of stating the main technical part of our result is the following reduction, which we will denote by $\CR$: Given a set $\CS$ of simultaneous multivariate polynomial equations in which the variables are constrained to be in a closed bounded region in the positive orthant, we construct an Arrow-Debreu market with Leontief utilities, say $\CM$, which has one good corresponding to each variable in $\CS$. We prove that the equilibria of $\CM$, when projected onto prices of these latter goods, are in one-to-one correspondence with the set of solutions of the polynomials. This reduction, together with the fact that 3-Nash is FIXP-complete \cite{EY07} and that 3-Nash can be reduced to such a system $\CS$, yield FIXP-hardness for the Leontief case. 

We now describe the difficulties encountered in obtaining reduction $\CR$ and the ideas needed to overcome them; this should also help explain why FIXP-hardness of Leontief (and PLC) markets was a long-standing open problem. For this purpose, it will be instructive to draw a comparison between reduction $\CR$ and the reduction from 2-Nash to SPLC markets given in \cite{Chen.plc}.  At the outset, observe the latter is only dealing with linear functions of variables\footnote{Since the payoff of the row player from a given strategy is a linear function of the variables denoting the probabilities played by the column player.} and hence is much easier than the former.

Both reductions create one market with numerous agents and goods, and the amount of each good desired by an agent gets determined only after the prices are set. Yet, at the desired prices, corresponding to solutions to the problem reduced from, the supply of each good needs to be exactly equal to its demand. In the latter reduction, the relatively constrained utility functions give a lot more ``control'' on the optimal bundles of agents.  Indeed, it is possible to create one large market with many agents and many goods and still argue how much of each good is consumed by each agent at equilibrium. 

We do not see a way of carrying out similar arguments when all agents have Leontief utility functions. The key idea that led to our reduction was to create several modular units within the large market and ensure that each unit would have a very simple and precise interaction with the rest of the market. Leontief utilities, which seemed hard to manage, in fact enabled this in a very natural manner as described below.
\smallskip

{\bf Closed submarket:} A closed submarket is a set $S$ of agents satisfying the following: At every equilibrium of the complete market, 
the union of initial endowments of all agents in $S$ exactly equals the union of optimal bundles of all these agents. 
\smallskip

Observe that the agents in $S$ will not be sequestered in any way --- they are free to choose their optimal bundles from all
the goods available.  Yet, we will show that at equilibrium prices, they will only be exchanging goods among themselves. 
We note that the proof of PPAD-hardness for Nash equilibrium computation also uses small game gadgets to accomplish arithmetic
operations of addition, multiplication and comparison \cite{DGP,CDT}; however, since there variables are captured through strategies of
different players, these gadgets did not interfere. In our case, the primary challenge is to prevent flow of goods across gadgets and 
to ensure desired price dependencies even when same goods are used across gadgets. We achieve this through the notion of closed submarkets.

These closed submarkets enable us to ensure that variables denoting prices of goods satisfy specified arithmetic relations. The latter are
linear function and product; we show that these two arithmetic relations suffice to encode any polynomial equation. Under linear functions, we want
that $p_a = B p_b + C p_c + D$, where $B, C$ and $D$ are constants. 

Under product, we want that $p_a = p_b \cdot p_c$. Designing this closed submarket, say $\CM$, requires several ideas, which we now describe. 
$\CM$ has an agent $i$ whose initial endowment is one unit of good $a$ and she desires only good $c$. We will ensure that the {\em amount}
of good $c$ leftover, after all other agents in the submarket consume what they want, is exactly $p_b$, i.e., the {\em price} of good $b$. At equilibrium,
$i$ must consume all the leftover good $c$, whose total cost is $p_b \cdot p_c$. Therefore the price of her initial endowment, i.e., one unit of good $a$, must be 
$p_b \cdot p_c$, hence establishing the required product relation.
The tricky part is ensuring that exactly $p_b$ amount of good $c$ is leftover, without knowing what $p_b$ will be at equilibrium. This is non-trivial, and
this submarket needs to have several goods and agents in addition to the ones mentioned above. 

Once reduction $\CR$ is established,
FIXP-hardness follows from the straightforward observation that a 3-Nash instance can be encoded via polynomials, where each variable, which represents the probability of playing a certain strategy, is constrained in the interval $[0, 1]$. 

To get $\ER$-hardness, we appeal to the
result of \cite{SS} that checking if a 3-Nash instance has a solution in a ball of radius half in $l_{\infty}$-norm is $\ER$-hard;
this entails constraining the variables to be  in the interval $[0, 1/2]$. By Nash's theorem, in the former case, the market will admit an equilibrium 
and in the latter case, it will admit an equilibrium iff the 3-Nash instance has a solution in the ball of radius half in $l_{\infty}$-norm.
Membership in $\ER$ follows by essentially showing a reduction in the reverse direction: given a Leontief market, we obtain a set of simultaneous multivariate polynomial equations whose roots capture its equilibria. 

Our final result gives a polynomial time algorithm for computing an equilibrium for Arrow-Debreu exchange markets under Leontief utility functions provided the number of agents is a constant, say $d$. Using the property that equilibrium allocation of an agent can be written in terms of her equilibrium utility, we show that if equilibrium exists, then there is one where the number of goods with positive prices is at most $d$. Next, we iterate over all subsets of size $d$ of goods, and for each set we reduce the problem of checking existence of equilibrium to checking feasibility of a set of polynomial inequalities in $2d$ dimension. Since this can be done in polynomial time \cite{bpr,bpr-book}, we get a polynomial time algorithm.

\medskip

\noindent {\bf Organization.} 
In Section \ref{sec.ad} we define the Arrow-Debreu market model, and the relevant utility and production functions. The definition of 3-player Nash equilibrium problem, and the complexity classes FIXP and $\ER$ are given in Section \ref{sec.nash}. Section \ref{sec.mred} contains our main result where we show FIXP-hardness of computing an equilibrium in Leontief exchange markets. In Section \ref{sec.eetr}, we show that checking existence of an equilibrium in PLC markets is in $\ER$. 
Section \ref{sec.const} describes a polynomial time algorithm for computing an equilibrium in Leontief exchange markets provided the number of agents is a constant. 
\medskip

\noindent {\bf Notation.} Capital letters denote matrices of constants, like $W$; bold
lower case letters denote vector of variables, like $\xx,\yy$; and calligraphic capital letters denote sets like $\CA, \CG$. We use $[n]$ to denote the set $\{1,\dots,n\}$. Given an $n$-dimensional vector $\xx$ and a number $r \in \R$, by $\xx \le r$, we mean $\forall i\in[n],\ x_i\le r$.
\medskip

\section{The Arrow-Debreu Market Model}
\label{sec.ad}

An Arrow-Debreu (AD) market \cite{AD} consists of a set $\CG$ of divisible goods, a set $\CA$ of agents and a set $\CF$ of firms. Let $g$ denote the number of goods in the market. 

The production capabilities of a firm is defined by a convex set of production schedules and each firm wants to use a (optimal)
production schedule that maximizes its profit $-$ money earned from the production minus the money spent on the raw materials. 
Firms are owned by agents: $\Theta_{if}$ is the profit share of agent $i$ in firm $f$ such that $\forall f\in \CF,\
\sum_{i\in\CA} \Theta_{if}=1$.
Each agent $i$ has a utility function $U_i: \Rplus^g \ra \Rplus$ over bundles, and comes with an initial endowment of goods; $W_{ij}$ is
amount of good $j$ with agent $i$. 
Each agent wants to buy a (optimal) bundle of goods that maximizes her utility to the extent allowed by her earned money --
from initial endowment and profit shares in the firms. 

Given prices of goods, if there is an assignment of optimal production schedule to each firm and optimal affordable bundle
to each agent so that there is neither deficiency nor surplus of any good, then such prices are called {\em market
clearing} or {\em market equilibrium} prices; we note that a zero-priced good is allowed to be in surplus. 
The market equilibrium problem is to find such prices when they exist. In a celebrated
result, Arrow and Debreu \cite{AD} proved that market equilibrium always exists under some mild conditions, however the
proof is non-constructive and uses heavy machinery of Kakutani fixed point theorem. We note that an arbitrary market may not admit an
equilibrium.

A well studied restriction of Arrow-Debreu model is {\bf exchange economy}, i.e., markets without production firms.
To work under finite precision it is customary to assume that utility functions are piecewise-linear concave (PLC) and
production sets are polyhedral.

\subsection{Piecewise-linear concave (PLC) utility function}
\label{sec.util}

As stated earlier, agent $i$'s utility function is $U_i: \Rplus^g \ra \Rplus$ over bundle of goods.
These functions are said to be piecewise-linear concave (PLC) if at bundle
$\xx_i=(x_{i1},\dots,x_{ig})$ it is given by:
\[ U_i(\xx_i) = \min_{k} \{\sum_j U^k_{ij}x_{ij} + T^k_i\},\]
where $U^k_{ij}$'s and $T^k_i$'s are given non-negative rational numbers. Since the agent gets zero utility when she gets nothing, we
have $U_i(\zero)=0$, and therefore at least one $T^k_i$ is zero.

\subsubsection{Leontief utility function}
\label{sec.leon}

The Leontief utility function is a special subclass of PLC, where each good is required in a fixed proportion.
Formally, it is given by: \[ U_i(\xx_i)=\min_{j \in \CG: A_{ij}>0} \left\{\frac{x_{ij}}{A_{ij}}\right\},\] where $A_{ij}$'s are
non-negative numbers. In other words the agent wants good $j$ in $A_{ij}$ proportion; $A_{ij}=0$ implies good $j$
is not desired by the agent. Clearly, the agent has to spend $\sum_j A_{ij}p_j$ amount of money to get one unit of utility. Thus, the 
optimal bundle satisfies the following condition.
\begin{equation}\label{eq.leon}
\forall j \in \CG,\ x_{ij}=\beta_i A_{ij},\ \ \ \ \mbox{where}\ \ \beta_i = \frac{ \sum_{j\in \CG} W_{ij} p_j }{\sum_{j\in \CG} A_{ij}p_j} \enspace .
\end{equation}

\subsection{PLC production sets}\label{sec.proddef}
A firm can produce a set of goods using another set of goods as raw materials, and these two sets are assumed to be disjoint. 
Let $\CP_f \in \R^g$ be the set of production schedules for firm $f$, then if it can produce a bundle $\xx^s$ using a bundle
$\xx^r$ then $\xx^s - \xx^r \in \CP_f$. The set is assumed to be downward closed, contains the origin $\zero$, no vector is
strictly positive ({\em no production out of nothing}) and is polyhedral. We call these {\em PLC production sets}.  

Let $\CS_f$ denote the set of goods that can be produced by firm $f$ and $\CR_f$ be the set of goods it can use as raw material such
that $\CS_f\cap \CR_f=\emptyset$. 
A PLC production set of firm $f$ can be described as follows, where $x^s_{fj}$ and $x^r_{fj}$ denote the amount of good $j$ produced
and used respectively. 
\begin{equation}
\begin{aligned}
\CP_f = \Big\{(\xx^s-\xx^r) \in \R^g \ | \ \sum_{j\in \CS_f} D^k_{fj}x^{s}_{fj} & \le \sum_{j \in \CR_f} C^k_{fj}x^{r}_{fj} + T^k_{f},\ \forall k;\\  
 \xx^s & \ge 0; \ \ x^s_{fj}=0,\ \forall j \notin \CS_f;\\ 
 \xx^r & \ge 0; \ \ x^r_{fj}=0,\ \forall j \notin \CR_f\Big\}, 
\end{aligned}
\end{equation}
where $D^k_{fj}$'s, $C^k_{fj}$'s and $T^k_{f}$'s are given non-negative rational numbers. 
Since there is no production if no raw material is consumed, it should be the case that for some $k$, $T^k_{f}=0$.

\subsubsection{Leontief production}
The Leontief production is a special subclass of PLC production sets, where a firm $f$ produces a single good $a$ using a
subset of the rest of the goods as raw materials. To produce one unit of $a$, it requires $D_{fj}$ units of good $j$, i.e.,   
\[x^s_{fa}=\min_{j \neq a}\left\{\frac{x^r_{fj}}{D_{fj}}\right\}\enspace .\]\smallskip

\section{$3$-player Nash Equilibrium ($3$-Nash)}\label{sec.nash} 
In this section we describe 3-player finite games and characterize their Nash equilibria. 
Given a $3$-player finite game, let the set of strategies of player $p\in\{1,2,3\}$ be denoted by $\CS_p$. 
Let $\CS=\CS_1\times\CS_2\times\CS_3$. Without loss of generality, we assume that
$|\CS_1|=|\CS_2|=|\CS_3|=n_s$. Such a game can be represented by $n_s\times n_s\times n_s$-dimensional 
tensors $A_1, A_2$ and $A_3$ representing payoffs of first, second and third player respectively.  
If players play $\ss=(s_1,s_2,s_3)\in \CS$, then the payoffs are $A_1(\ss)$, $A_2(\ss)$ and $A_3(\ss)$
respectively.

Since players may randomize among their strategies, let $\Delta_p$ denote the probability distribution over set $\CS_p,\
\forall p\in\{1,2,3\}$ (the set of mixed-strategies for player $p$), and let $\Delta=\Delta_1\times\Delta_2\times \Delta_3$. Given a
mixed-strategy profile $\zz=(\zz_1,\zz_2,\zz_3) \in \Delta$, let $z_{ps}$ denote the probability with which
player $p$ plays strategy $s \in \CS_p$, and let $\zz_{-p}$ be the strategy profile of all the players at $\zz$ except $p$.
For player $p\in\{1,2,3\}$ the total payoff and payoff from strategy $s \in \CS_p$ at $\zz$ are respectively,
\[
\pi_p(\zz)=\sum_{\ss \in \CS} A_p(\ss) z_{1s_1} z_{2s_2} z_{3s_3}\ \ \ \mbox{ and }\ \ \ \pi_p(s,\zz_{-p})=\sum_{\tt \in \CS_{-p}}
A_p(s,\tt) \Pi_{q\neq p} z_{qt_{q}}\enspace . 
\]

\begin{definition}[Nash (1951) \cite{nash}]
A mixed-strategy profile $\zz \in \Delta$ is a Nash equilibrium (NE) if no player gains by deviating unilaterally. Formally, $\forall p\ \ \pi_p(\zz) \ge \pi_p(\zz',\zz_{-p}), \forall \zz' \in \Delta_p$.
\end{definition}

In 1951, Nash \cite{nash} proved existence of an equilibrium in a finite game using Brouwer fixed-point
theorem, which is highly non-constructive. 
Despite many efforts over the years, no efficient methods are obtained to compute a NE of finite games.
Next we give a characterization of NE through multivariate polynomials and discuss its complexity.
\medskip

\noindent{\bf NE Characterization.} It is easy to see that in order to maximize the expected payoff, only {\em best moves}
should be played with a non-zero probability; by {\em best moves} we mean the moves fetching maximum payoff.
Formally,
\begin{equation}\label{eq.nash}
\forall p,\ \forall s \in S_p,\   z_{ps} > 0 \Rightarrow \pi_p(s,\zz_{-p}) = \delta_p,\ \mbox{ where }\ \delta_p=\max_{s' \in S_p} \pi_p(s',\zz_{-p})\enspace . 
\end{equation}

\noindent{\em Assumption.} Since scaling all the co-ordinates of $A_p$'s with a positive number or adding a constant to
them does not change the set of Nash equilibria of the game $\AA=(A_1,A_2,A_3)$, without loss of generality we assume that all the
co-ordinates of each of $A_p$ are in the interval $[0, 1]$. 
Using (\ref{eq.nash}), we can define the following system of multivariate polynomials, where variables $z_{ps}$'s capture
strategies, $\delta_p$ captures payoff of player $p$, and $\beta_{ps}$ are slack variables:
\begin{equation}\label{eq.fne}
\begin{array}{ll}
F_{NE}(\AA): & \begin{array}{ll}
\forall p, & \sum_{s \in S_p} z_{ps}=1\M \\
\forall p,\ s \in S_p, & \pi_p(s,\zz_{-p}) + \beta_{ps} = \delta_p\ \mbox{ and }\ z_{ps}\beta_{ps}=0\M \\
\forall p,\ s \in S_p, & 0 \le z_{ps}\le 1,\ \ \ 0 \le \beta_{ps}\le 1,\ \ \ 0 \le \delta_p \le 1\M
\end{array}\enspace . 
\end{array}
\end{equation}

\begin{lemma}\label{lem.nash}
Nash equilibria of $\AA$ are exactly the solutions of system $F_{NE}(\AA)$, projected onto $\zz$.
\end{lemma}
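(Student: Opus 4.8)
The plan is to prove the two inclusions separately: every Nash equilibrium, suitably padded with auxiliary coordinates, is a solution of $F_{NE}(\AA)$, and conversely the $\zz$-projection of any solution of $F_{NE}(\AA)$ is a Nash equilibrium. The characterization in~(\ref{eq.nash}) is the bridge between the two, so the first sub-step is to justify~(\ref{eq.nash}) itself: a mixed strategy $\zz_p$ is a best response iff it places positive weight only on pure strategies achieving the maximum expected payoff $\delta_p = \max_{s'} \pi_p(s',\zz_{-p})$. This is the standard fact that $\pi_p(\zz) = \sum_{s} z_{ps}\,\pi_p(s,\zz_{-p})$ is a convex combination of the per-strategy payoffs, hence is maximized (over $\zz_p \in \Delta_p$) precisely when all mass sits on maximizers; I would state this in one or two lines.

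For the forward direction, suppose $\zz \in \Delta$ is a NE. For each $p$ and $s \in S_p$ define $\delta_p := \max_{s'\in S_p}\pi_p(s',\zz_{-p})$ and the slack $\beta_{ps} := \delta_p - \pi_p(s,\zz_{-p}) \ge 0$; then the first equation group of~(\ref{eq.fne}) holds because $\zz_p$ is a distribution, and the payoff equation $\pi_p(s,\zz_{-p})+\beta_{ps}=\delta_p$ holds by definition. The complementarity $z_{ps}\beta_{ps}=0$ is exactly~(\ref{eq.nash}), which holds because $\zz$ is a best response for each $p$. For the box constraints: $z_{ps}\in[0,1]$ is immediate; $\delta_p\in[0,1]$ and $\beta_{ps}\in[0,1]$ follow from the normalization assumption that all entries of $A_p$ lie in $[0,1]$, since each $\pi_p(s,\zz_{-p})$ is then a convex combination of entries of $A_p$ and hence lies in $[0,1]$, and $\beta_{ps}=\delta_p-\pi_p(s,\zz_{-p})\in[0,1]$ as a difference of two quantities in $[0,1]$ with the right sign. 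Thus $(\zz,\bbeta,\ddelta)$ solves $F_{NE}(\AA)$ and projects to $\zz$.

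For the reverse direction, let $(\zz,\bbeta,\ddelta)$ be any solution of $F_{NE}(\AA)$. The first constraint group forces $\zz_p\in\Delta_p$ for each $p$, so $\zz\in\Delta$ is a legitimate mixed-strategy profile. From $\pi_p(s,\zz_{-p})+\beta_{ps}=\delta_p$ with $\beta_{ps}\ge 0$ we get $\pi_p(s,\zz_{-p})\le\delta_p$ for every $s\in S_p$, and from $z_{ps}\beta_{ps}=0$ we get $\beta_{ps}=0$ whenever $z_{ps}>0$, i.e.\ $\pi_p(s,\zz_{-p})=\delta_p$ on the support of $\zz_p$. Hence $\delta_p=\max_{s'\in S_p}\pi_p(s',\zz_{-p})$ and $\zz_p$ is supported on maximizers; by the best-response fact above, $\pi_p(\zz)=\delta_p\ge\pi_p(\zz',\zz_{-p})$ for all $\zz'\in\Delta_p$, so $\zz$ is a NE. Combining the two directions gives the claimed bijection onto the $\zz$-coordinates.

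I do not expect a genuine obstacle here; the lemma is essentially unwinding definitions. The only point requiring a little care is making sure the box constraints $0\le\delta_p\le 1$ and $0\le\beta_{ps}\le 1$ in~(\ref{eq.fne}) are consistent with \emph{some} solution extending each NE (so that the map is onto and not just into a larger relaxed system) --- and that is exactly where the normalization assumption on the entries of $A_p$ is used. It is also worth a sentence to note that the auxiliary variables $\bbeta,\ddelta$ are \emph{uniquely} determined by $\zz$ (since $\delta_p$ is a max and $\beta_{ps}$ a difference), so the correspondence is genuinely one-to-one on $\zz$ even though the solution set of $F_{NE}(\AA)$ lives in a higher-dimensional space.
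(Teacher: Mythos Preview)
Your proposal is correct and follows essentially the same approach as the paper: both directions are handled via the characterization~(\ref{eq.nash}), with the box constraints on $\delta_p$ and $\beta_{ps}$ justified by the normalization of the entries of $\AA$ to $[0,1]$. Your write-up is simply more detailed than the paper's terse version, including the explicit best-response justification and the remark on uniqueness of $(\bbeta,\ddelta)$ given $\zz$.
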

\begin{proof}
It is easy to check that NE of $\AA$ gives a solution of $F_{NE}(\AA)$ using (\ref{eq.nash}); the upper bounds on variables
of $F_{NE}(\AA)$ holds because all the entries in $\AA$ are in the interval $[0, 1]$. Similarly, given a solution
$(\zz,\bbeta,\ddelta)$ of $F_{NE}(\AA)$, the first condition ensures that $\zz \in \Delta$. 
The two parts of the second condition imply that $\zz$ satisfies (\ref{eq.nash}) and therefore is a NE of game $\AA$.
\end{proof}

Let $3$-Nash denote the problem of computing Nash equilibrium of a $3$-player game. Next we describe complexity classes FIXP
and $\ER$, and their relation with $3$-Nash.

\subsection{The class FIXP}\label{sec.fixp}

The class FIXP, which was introduced in \cite{EY07}, consists of problems that can be cast, in polynomial time, as the problem of computing a fixed point of a continuous algebraic function.

 Basic complexity classes, such as P, NP, NC and \#P, are defined via machine models. In the same vein, in the case of FIXP, an algebraic circuit plays the role of ``machine model''. Furthermore, the circuit must use the standard arithmetic operations of $+$, $-$, $*$, $/$, $\min$ and $\max$. A total problem is in FIXP if there is a polynomial time algorithm which given an instance $I$ of the problem, outputs such a circuit. Under the restrictions imposed on the circuit, the function defined by this ``machine'' is guaranteed to be continuous, and hence by Brouwer's fixed point theorem, must have at least one fixed point. We further want that each fixed point must correspond to a solution of the total problem.

The problem of computing an equilibrium for a $k$-player game, for $k \geq 3$, game is FIXP-complete \cite{EY07}; in fact, this is the quintessential FIXP-complete problem. In contrast, a two-player bimatrix game always admits rational equilibria and computing it is PPAD-complete \cite{DGP,CDT}. The two classes, PPAD and FIXP seem to be rather disparate: whereas PPAD is contained in function classes NP $\cap$ co-NP, the class FIXP lies somewhere between P and PSPACE, and is likely to be closer to the harder end of PSPACE.

\medskip

\noindent{\bf Reduction requirements:} 
A reduction from problem $A$ to problem $B$ consists of two polynomial-time computable functions: a function $f$ that maps
an instance $I$ of $A$ to an instance $f(I)$ of $B$, and another function $g$ that maps a solution $\yy$ of $f(I)$ to a
solution $\xx$ of $I$. If $x_i=g_i(\yy)$, then $g_i(\yy)=a_i y_j +b_i$, for some $j$, where $a_i$ and $b_i$ are
polynomial-size rational numbers; every coordinate of $\xx$ is a linear function of one coordinate of $\yy$.
\medskip

\begin{theorem}\label{thm.fixp}\cite{EY07}
Given a $3$-player game $\AA=(A_1,A_2,A_3)$, computing its NE is FIXP-complete.  
\end{theorem}

\subsection{Existential Theory of Reals}\label{sec.etr}
The class $\ER$ was defined to capture the decision problems arising in {\em existential theory of reals}~\cite{SS}. An instance $I$ of $\ER$ consists of a sentence of the form,
\[
   (\exists x_1, \dots , x_n)\phi(x_1, \dots , x_n), 
\]
where $\phi$ is a quantifier-free $(\land,\lor,\lnot)$-Boolean formula over the predicates (sentences) defined by signature $\{0, 1, -1, +, *, <, \le, =\}$ over variables that take real values. The question is whether the sentence is true. Following is an example of such an instance,
\[
\exists(x_1, x_2),\ \ (x_1^4+x_1^3x_2^2-3x_2^3+1=0 \ \land\ x_1x_2 \ge 3)\ \ \lor \ \ (x_2^4 - 3x_1 < 6) \enspace . 
\]

\cite{SS} showed that disallowing the operation of $<$ does not change the class $\ER$.  The size of the problem is $n+size(\phi)$, where $n$ is the number of variables and
$size(\phi)$ is the minimum number of signatures needed to represent $\phi$ (we refer the readers to
\cite{SS} for detailed description of $\ER$, and its relation with other classes like PSPACE). 
\cite{SS} showed the following result; the first result on the complexity of a {\em decision} version of $3$-Nash.

\begin{definition}[Decision $3$-Nash]\label{def.d3n}
Decision $3$-Nash is the problem of checking if a given $3$-player game $\AA$ admits a Nash equilibrium $\zz$ such that $\zz \le 0.5$. 
\end{definition}

\begin{theorem}\label{thm.etr}\cite{SS}
Decision $3$-Nash is $\ER$-complete.
\end{theorem}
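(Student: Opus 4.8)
The plan is to prove the two directions of the completeness statement separately: membership of Decision $3$-Nash in ETR, and ETR-hardness of Decision $3$-Nash. The first is routine; the second is the substantive part.

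\emph{Membership in ETR.} I would read this off the polynomial characterization already established. By Lemma~\ref{lem.nash}, a game $\AA$ admits a Nash equilibrium $\zz$ with $\zz \le 0.5$ if and only if the system $F_{NE}(\AA)$ augmented with the inequalities $z_{ps} \le \tfrac12$ (over all $p$ and all $s \in \CS_p$) is feasible. Every constraint there is a polynomial equality or inequality over the variables $\zz,\bbeta,\ddelta$, whose constants are the (rescaled, rational) payoff entries $A_p(\ss)$ and the number $\tfrac12$; each rational constant can be introduced by a fresh variable and a defining polynomial equation using only the signature $\{0,1,-1,+,*,\le,=\}$, with the large integers built up in polynomial size by repeated squaring. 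Prefixing the resulting quantifier-free conjunction with existential quantifiers over $\zz,\bbeta,\ddelta$ and these auxiliary variables yields, in polynomial time, an ETR sentence that holds iff the required equilibrium exists. Hence Decision $3$-Nash $\in$ ETR.

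\emph{ETR-hardness.} I would reduce from the ETR-complete problem of deciding whether a finite system of integer-coefficient polynomial equations has a real solution inside a prescribed box, which I would first obtain from a raw ETR sentence by two normalizations. First, eliminate the Boolean structure: push negations to atoms, express each relational atom with equality alone ($p\ge 0$ as $p=y^2$, $p\le 0$ as $p=-y^2$, $p\neq 0$ as $py=1$, $p>0$ as $p=y^2 \wedge pz=1$, discarding $<$ as allowed in the excerpt, all with fresh variables), and fold connectives bottom-up via $g=0\wedge h=0 \Leftrightarrow g^2+h^2=0$ and $g=0\vee h=0 \Leftrightarrow gh=0$, arriving at a single equation (equivalently, a system) over $\R^{n'}$. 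Second, bound the domain by the rational change of variables $x_i = u_i/(1-u_i^2)$, which maps $u_i\in(-1,1)$ bijectively onto $\R$; after clearing denominators one gets an equivalent system whose solutions of interest lie in a box, and a standard compactness/perturbation argument lets me pass to a closed box, say $[-\tfrac12,\tfrac12]^{n''}$ after one affine rescaling. Then, by the classical arithmetic-gadget technique for games, I would build a $3$-player game whose Nash equilibria with all probabilities $\le \tfrac12$ are exactly the solutions of this bounded system: each variable $x_i$ is carried by a dedicated two-strategy player, the probability of its first strategy playing the role of $x_i$ (so its value lies in $[0,\tfrac12]$ once payoffs are arranged so it cannot exceed $\tfrac12$ at equilibrium — matching the Decision $3$-Nash threshold); gadget subgames realize a constant $z=\alpha$, a copy $z=x$, truncated scaled addition/subtraction $z=\min(\max(x\pm y,0),\tfrac12)$, and, crucially, a multiplication gadget, the last being exactly why three players are needed, since with three players a player's expected payoff from a strategy is a \emph{bilinear} function of the other two players' mixed strategies (so products $z_{2t_2}z_{3t_3}$ are available), whereas with two players the indifference conditions are linear and yield only rational equilibria. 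Wiring these gadgets to evaluate each $f_k$ and forcing the output against a constant-$0$ gadget enforces $f_k(x)=0$; the instance map and the solution map (each recovered coordinate an affine function of one equilibrium probability) are polynomial-time, as a reduction requires. Together with membership this gives ETR-completeness.

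\emph{Main obstacle.} The delicate step is the gadget layer: one must design the subgames so that in \emph{every} Nash equilibrium the encoded reals satisfy the intended arithmetic relations \emph{exactly} and with \emph{no spurious} equilibria. The recurring pitfalls are the ``overflow'' of addition and scaling — which forces every intermediate quantity to be kept within a fixed range by design and forces the use of truncated operations whose truncation must provably never be active along the relevant computation paths — and the danger that the best-response/indifference structure of a small gadget leaves a continuum of equilibria instead of pinning a value down. The boundedness normalization above is also a genuine subtlety rather than a formality, since a polynomial feasible over $\R^{n}$ need not be feasible over any a priori fixed box; the rational-substitution/compactification trick is precisely what makes the reduction go through and is what ties the hardness to the ``$\le 0.5$'' form of the decision problem.
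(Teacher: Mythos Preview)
The paper does not prove this theorem at all; it is quoted as a result of Schaefer and \v{S}tefankovi\v{c} \cite{SS} and used as a black box to derive Theorem~\ref{thm.ExEtr}. So there is no ``paper's own proof'' to compare your proposal against.

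That said, your hardness argument has a genuine gap. You write that ``each variable $x_i$ is carried by a dedicated two-strategy player.'' With $n''$ variables (plus auxiliary gadget players for copying, arithmetic, and comparison) this yields a game with $\Theta(n'')$ players, not three. Decision $3$-Nash is specifically about \emph{three}-player games, and the step that packs arbitrarily many real-valued gadgets into the mixed strategies of only three players---while preserving \emph{exact} Nash equilibria and the $\zz\le\tfrac12$ constraint---is a substantial reduction in its own right (e.g.\ via coloring/lawyer-player simulations of graphical games, or direct multiplexing of many gadgets onto three large strategy sets). You have not mentioned, let alone carried out, this step, and it is where much of the technical content of \cite{SS} sits. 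Your remark that ``three players are needed'' because payoffs are bilinear in the other two players' strategies explains only why two players do \emph{not} suffice, not why three \emph{do}.

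A secondary issue is your boundedness normalization. The substitution $x_i=u_i/(1-u_i^2)$ gives equivalence with feasibility over the \emph{open} cube $(-1,1)^{n''}$; your appeal to ``a standard compactness/perturbation argument'' to pass to a closed box is not standard and is not valid as stated---the reduction must fix the box in advance, independent of whether a solution exists. What is actually needed here is an explicit quantitative bound (from real algebraic geometry) on the magnitude of some solution whenever one exists, so that the box can be chosen uniformly from the input size; this is again part of what \cite{SS} supplies.
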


Note that changing the upper bound on all $z_{ps}$s from $1$ to $0.5$ in $F_{NE}(\AA)$ (\ref{eq.fne}), exactly captures the NE with $\zz \le 0.5$. Thus {\em Decision $3$-Nash} can be reduced to checking if such a system of polynomial equalities admits a solution. Next we show a construction of Leontief exchange markets to exactly capture the solutions of a system of multivariate polynomials, similar to that of $F_{NE}(\AA)$, at its equilibria.

\section{Multivariate Polynomials to Leontief Exchange Market}\label{sec.mred}
Consider the following system of $m$ multivariate polynomials on $n$ variables $\zz=(z_1,\dots,z_n)$,
\begin{equation}\label{eq.F}
F: \{f_i(\zz)=0,\forall i \in[m]; \ \ \ L_j \le z_j \le U_j,\ \forall j \in[n]\},\ \ \mbox{where $L_j,U_j \ge 0$} \enspace . 
\end{equation}

The coefficients of $f_i$'s, and the upper and lower bounds $U_j$'s and $L_j$'s are assumed to be rational numbers.
In this section, we show that solutions of $F$ can be captured as equilibrium prices of an exchange market with
Leontief utility functions (see Section \ref{sec.leon} for definition). The problems of $3$-Nash and Decision $3$-Nash (see Definition
\ref{def.d3n}) can be characterized by a set similar to (\ref{eq.F}) (see (\ref{eq.fne}) in Section \ref{sec.nash}), in turn we obtain
FIXP and $\ER$ hardness results for Leontief markets, and in turn PLC markets, from the corresponding hardness of $3$-Nash \cite{EY07,SS}.

Polynomial $f_i$ is represented as sum of monomials, and a monomial $\alpha z_1^{d_1}\dots z_n^{d_n}$ is represented by
tuple $(\alpha,d_1,...,d_n)$; here coefficient $\alpha$ is a rational number. Let $\CM_{f_i}$ denote the set of monomials of
$f_i$, and $size[f_i]=\sum_{(\alpha,\dd)\in \CM_{f_i}} size(\alpha,\dd)$, where $size(r)$ for a rational number $r$ is the minimum
number of bits needed to represent its numerator and denominator. Degree of $f_i$ is
$deg(f_i)=\max_{(\alpha,\dd)\in \CM_{f_i}} \sum_j d_j$. The size of $F$, denoted by $size[F]$, is $m+n+\sum_j
(size(U_j)+size(L_j)) +\sum_i (deg(f_i) + size[f_i])$. Given a system $F$, next we construct an exchange market in time polynomial in
$size[F]$, whose equilibria correspond to solutions of $F$.

\subsection{Preprocessing}\label{sec.pre}
First, we transform $F$ into a polynomial sized equivalent system that uses only the following three basic operations on {\em non-negative} variables.
\begin{equation}\label{eq.basic}
\begin{array}{ll}
(EQ.)& z_a=z_b\\
(LIN.)& z_a = Bz_b + C z_c +D,\ \mbox{ where }\ 
B,C,D \ge 0\\ 
(QD.)& z_a=z_b*z_c 
\end{array}\enspace . 
\end{equation}

\begin{remark}
We note that even though (EQ.) is a special case of (LIN.), we consider it separately in order to convey the main ideas. 
\end{remark}

Next we illustrate how to capture $f_i$'s using these basic operations through an example.  Consider a polynomial
\[ 4z_1^2z_2 + 3z_1z_2  - z_1 - 2=0\enspace . \]

First, move all monomials with negative coefficients to right hand side of the equality, so that all coefficients 
become positive, 
\[4z_1^2z_2 + 3z_1z_2  =  z_1 + 2 \enspace . \]

Second, capture every monomials, with degree more than one, using basic operations:
\[
\begin{array}{lcl}
z_{a_1}=z_1^2z_2 &\ \  \equiv\ \  &  z_{a_2}=z_1*z_1,\ z_{a_1}= z_{a_2}*z_2 \\ 
z_{b_1}=z_1z_2 &\ \  \equiv\ \  &  z_{b_1}=z_1*z_2 
\end{array}\enspace . 
\]

Third, capture the equality $4z_{a_1}+3z_{b_1}=z_1+2$ using $(LIN.)$ and $(EQ.)$:
\[
\begin{array}{lcl}
4z_{a_1}+3z_{b_1}=z_1+2 &\ \ 
\equiv\ \  & 
z_{e_1}=4z_{a_1}+3z_{b_1}, \ z_{f_1}=z_1+2,\ z_{e_1}=z_{f_1}
\end{array}\enspace . 
\]

Finally, combine all of the above to represent $f_i$ as follows: 

\begin{equation}\label{eq.eg}
\begin{array}{lcl}
4z_1^2z_2 + 3z_1z_2 - z_1 - 2=0 &
\equiv & 
\begin{array}{c}
z_{a_2}=z_1*z_1,\ z_{a_1}=z_{a_2}*z_2 \\ 
z_{b_1}=z_1*z_2,\ z_{e_1}=4z_{a_1}+3z_{b_1}\\ z_{f_1}=z_1+2,\ z_{e_1}=z_{f_1} \\
\end{array}\enspace . 
\end{array}
\end{equation}

Since inequalities have to be captured through equalities with non-negative variables, the inequalities of (\ref{eq.F}) have
to be transformed as follows:
\begin{equation}\label{eq.F'}
\begin{array}{lll}
& \forall j \in [n], & z_j = s^l_j + L_j,\ \ \ z_j+s^u_j =U_j, \ \ \ z_j, s^l_j, s^u_j \ge 0\enspace . 
\end{array}
\end{equation}

Let $R(F)$ be a reformulation of $F$, using transformation similar to (\ref{eq.eg}) for each $f_i$, and that of (\ref{eq.F'})
for each inequality. All the variables in $R(F)$ are constrained to be non-negative.  

In order to construct $R(F)$ from $F$, we need to introduce many auxiliary variables (as was done in (\ref{eq.eg})).
Let the number of variables in $R(F)$ be $N$, and out of these let $z_1,\dots,z_n$ be the original set of variables
of $F$ (\ref{eq.F}). 
Given a system $R(F)$ of equalities, we will construct an exchange market $\CM$, such that the value of each variable $z_j,\ j \in[N]$ 
is captured as price $p_j$ of good $G_j$ in $\CM$. Further, we make sure that these prices satisfy all the relations in
$R(F)$ at every equilibrium of $\CM$.

\subsection{Ensuring scale invariance}
\label{sec.scale}

Since equilibrium prices of an exchange market are scale invariant, the relations that these prices satisfy have to be scale invariant
too. However, note that in (\ref{eq.basic}) $(LIN.)$ and $(QD.)$ are not scale invariant. To handle this we introduce a special good
$G_s$, such that when its price $p_s$ is set to $1$ we get back the original system.
\begin{equation}\label{eq.trans}
\begin{array}{ll}
(EQ.)& p_a=p_b \M\\
(LIN.)& p_a = B p_b + C p_c +D p_s,\ \mbox{ where }\ B,C,D \ge 0 \M\\ 
(QD.)& p_a=\frac{p_b*p_c}{p_s}
\end{array}\enspace . 
\end{equation}

Let $R'(F)$ be a system of equalities after applying the transformation of (\ref{eq.trans}) to $R(F)$. Note that, $R'(F)$ has exactly
one extra variable than $R(F)$, namely $p_s$, and solutions of $R'(F)$ with $p_s=1$ are exactly the solutions of $R(F)$. 

Let the size of $R'(F)$ be (\# variables + \# relations in $R'(F)$ + $size(B,C,D)$ in each of (LIN.)-type relations).
Recall that $\CM_{f_i}$ denote the set of monomials in polynomial $f_i$.
To bound the values at a solution of $R'(F)$, define

\[
\begin{array}{lcl}
H= M_{max} U_{max}^d + 1, & \mbox{ where } & \begin{array}{l}
d=\max_{f_i} deg(f_i), \ \ \ M_{max} = \max_i |\CM_{f_i}|,\\
U_{max}=\max\big\{\max_j U_j, \displaystyle\max_{\mbox{\small $f_i,
(\alpha,\dd) \in \CM_{f_i}$}} |\alpha|\big\}\enspace .\Z 
\end{array}
\end{array}
\]

\begin{lemma}\label{lem.RF}
$size[R'(F)]=poly(size[F])$. Vector $\pp$ is a non-negative solution of $R'(F)$ with $p_s=1$ iff $z_j=p_j,\ \forall j\in [n]$ is a solution of
$F$. Further, $p_j \le H,\ \forall j \in [N]$.
\end{lemma}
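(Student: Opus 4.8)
The plan is to break Lemma~\ref{lem.RF} into its three assertions and handle each in turn. For the size bound $size[R'(F)]=poly(size[F])$, I would trace through the preprocessing of Section~\ref{sec.pre} and the scale-invariance transformation of Section~\ref{sec.scale} step by step, counting the auxiliary variables and relations introduced at each stage. Converting a single monomial $\alpha z_1^{d_1}\cdots z_n^{d_n}$ into $(QD.)$-type relations costs $O(\sum_j d_j)=O(deg(f_i))$ fresh variables and relations; summing over all monomials of all $f_i$ gives $O(\sum_i (deg(f_i)+size[f_i]))$, which is $poly(size[F])$. The reformulation of each polynomial equation $f_i=0$ as one $(LIN.)$ relation plus one $(EQ.)$ relation (as in~(\ref{eq.eg})), and the box constraints via~(\ref{eq.F'}), each add only $O(1)$ variables and relations per item, so the $n$ inequalities contribute $O(n)$ more. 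The transformation~(\ref{eq.trans}) to $R'(F)$ adds exactly one variable $p_s$ and rewrites existing relations without changing their count. Bookkeeping on the magnitudes of $B,C,D$ completes this part; all constants introduced are sums or products of input coefficients and bounds, hence of polynomial bit-size.

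For the equivalence of solutions, I would argue in two directions using the remark already made in the text that ``solutions of $R'(F)$ with $p_s=1$ are exactly the solutions of $R(F)$.'' Setting $p_s=1$ in~(\ref{eq.trans}) literally recovers the system~(\ref{eq.basic}), so it suffices to show that non-negative solutions of $R(F)$ restricted to the original coordinates $z_1,\dots,z_n$ are exactly the solutions of $F$. Forward: given a solution $\zz$ of $F$, the auxiliary variables in $R(F)$ are \emph{defined} by the decomposition (each is a product or linear combination of earlier variables, e.g.\ $z_{a_2}=z_1*z_1$), so they are uniquely determined and non-negative since all original variables lie in $[L_j,U_j]\subseteq\Rplus$; the $(EQ.)$ relations $z_{e_i}=z_{f_i}$ then hold precisely because $f_i(\zz)=0$ after moving negative monomials to the right-hand side; and the slack variables $s^l_j=z_j-L_j\ge 0$, $s^u_j=U_j-z_j\ge 0$ witness~(\ref{eq.F'}). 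Backward: given any non-negative solution of $R(F)$, the chain of $(QD.)$ and $(LIN.)$ relations forces each auxiliary variable to equal the intended monomial or combination evaluated at $(z_1,\dots,z_n)$ (a straightforward induction on the order in which auxiliaries were introduced), the $(EQ.)$ relations then give $f_i(\zz)=0$, and~(\ref{eq.F'}) forces $L_j\le z_j\le U_j$. This is essentially routine once one fixes the evaluation order.

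For the bound $p_j\le H$ for all $j\in[N]$, I would use the equivalence just established together with the definition of $H=M_{max}U_{max}^d+1$. At any non-negative solution with $p_s=1$, the original-variable prices satisfy $p_j=z_j\le U_j\le U_{max}\le H$. For an auxiliary variable representing a monomial $\alpha z_1^{d_1}\cdots z_n^{d_n}$, its value is $|\alpha|\prod z_j^{d_j}\le U_{max}\cdot U_{max}^{\sum_j d_j}\le U_{max}^{d+1}\le H$ (adjusting the exponent count slightly to absorb the coefficient), and partial products arising in the $(QD.)$ chains are bounded by the same quantity. For an auxiliary variable $z_{e_i}=\sum (\text{coeff})\cdot(\text{monomial value})$ appearing on the left of a $(LIN.)$ relation — equivalently the positive part of $f_i$ — its value is at most $M_{max}$ (the number of monomials) times $U_{max}^{d}$ times the largest coefficient $U_{max}$, which is exactly what $H$ is designed to dominate; the slack variables $s^l_j\le U_j$ and $s^u_j\le U_j$ are trivially bounded. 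Collecting these cases gives $p_j\le H$ for every coordinate.

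The main obstacle I anticipate is not any single step but getting the magnitude bookkeeping in the third part to line up cleanly with the exact expression $H=M_{max}U_{max}^d+1$ — in particular making sure the coefficient $\alpha$ is correctly folded into the power of $U_{max}$ (since $|\alpha|$ can exceed $1$ it must be counted as one extra factor of $U_{max}$, which is why the ``$+1$'' and the definition of $U_{max}$ as a max over both bounds and coefficients appear), and checking that the intermediate partial products in the $(QD.)$ decomposition never transiently exceed $H$. The size-polynomiality and the solution-equivalence are comparatively mechanical inductions on the structure of the preprocessing.
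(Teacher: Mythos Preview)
Your proposal is correct and follows the same approach as the paper's own proof, only with substantially more detail. The paper argues the size bound by the same variable-and-relation count you outline, dismisses the solution-equivalence as ``follows by construction,'' and asserts the bound $p_j\le H$ with ``it is easy to see''; your case analysis for each type of auxiliary variable is precisely the unpacking of that last phrase, and your flagged concern about matching the exact constant $H=M_{max}U_{max}^d+1$ is a legitimate bookkeeping point that the paper does not spell out either.
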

\begin{proof}
For the first part, it is enough to bound $size[R(F)]$. Note that the number of auxiliary variables added to $R(F)$ to
construct a monomial of $f_i$ is at most its degree. Further, to construct the expression of $f_i$ from these, the number of extra
variables needed is at most the number of monomials. Further, the coefficients of
$(LIN.)$ type relations are coefficients of the monomials of $f_i$'s. Thus, we get that $size[R'(F)]=O(\sum_{i \in [m]}
deg(f_i)^2 size[f_i]+\sum_{j \in[n]} size(U_j)+size(L_j))=poly(size[F])$.

The second part follows by construction. For the third part note that $(p_1,\dots,p_N)$ is a solution of $R(F)$. 
Since variables of the original system $F$ is upper bounded by $U_j$'s, it is easy to see that the
maximum value of any variable in a non-negative solution of $R(F)$ is at most $H$. 
\end{proof}

Next, we construct a market whose equilibria satisfy all the relations of $R'(F)$, and has $p_s>0$.

\subsection{Market construction}\label{sec.red}
In this section, we construct 
market $\CM$ consisting of goods $G_1,\dots,G_N$ and $G_s$, such that the prices $p_1,\dots,p_N$ and $p_s$, satisfy all the relations
of $R'(F)$ at equilibrium.

First, we want price of $G_s$ to be always non-zero at equilibrium. To ensure this we add the following agent to market $\CM$.
Recall that $W_{ij}$ is the amount of good $G_j$ agent $A_i$ brings to the market, $U_i:\Rplus^g\ra\Rplus$ is the utility
function of $A_i$, and $\xx_i$ denotes the bundle of goods consumed by her.
\begin{equation}\label{eq.sms}
A_s:\  W_{ss}=1,\ W_{sj}=0,\ \forall j \in[N];\ \ \  U_s(\xx_s)=x_{ss}\enspace . 
\end{equation}

\begin{lemma}\label{lem.sms}
At every equilibrium of market $\CM$, we have $p_s>0$, and $x_{ss}=W_{ss}$.
\end{lemma}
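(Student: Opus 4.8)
The plan is to exploit the structure of agent $A_s$, whose only endowment is one unit of $G_s$ and whose utility $U_s(\xx_s)=x_{ss}$ depends solely on how much $G_s$ she consumes. First I would argue $p_s>0$ by contradiction: suppose $p_s=0$ at some equilibrium. Then $A_s$'s budget is $\sum_j W_{sj}p_j = W_{ss}p_s = 0$, so she can afford any bundle whose $G_j$-components for $j\neq s$ cost zero; but more to the point, since $G_s$ is free, she can demand an unbounded amount of $G_s$ (her utility is strictly increasing in $x_{ss}$, and $x_{ss}$ is only limited by her budget, which allows arbitrarily large quantities of a zero-priced good). Hence her demand for $G_s$ is unbounded, which violates market clearing for $G_s$ (total supply of $G_s$ is finite — it is whatever the other agents and $A_s$ collectively bring, a fixed finite amount). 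The subtlety here is the standard one about optimal bundles not being well-defined when a desired good is free; I would phrase this as: at a genuine equilibrium every agent has a well-defined optimal bundle, and $A_s$ has no finite optimal bundle when $p_s=0$, contradiction. So $p_s>0$.

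Given $p_s>0$, the second claim $x_{ss}=W_{ss}=1$ is almost immediate. Since $U_s$ is strictly increasing in $x_{ss}$ and $p_s>0$, at her optimal bundle $A_s$ spends her entire budget on $G_s$ (buying any other good $G_j$ would be a strict waste, as it does not increase her utility and costs positive money — note all prices in $\CM$ we will later verify are positive, but even if some $p_j=0$, buying $G_j$ does not help her utility, so w.l.o.g. her optimal bundle has $x_{sj}=0$ for $j\neq s$ and she is free to do so; the relevant point is she spends nothing on other goods at her unique utility-maximizing consumption of $G_s$). Her budget is exactly $W_{ss}p_s = p_s$, so $x_{ss} p_s = p_s$, giving $x_{ss}=1=W_{ss}$.

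The main obstacle I anticipate is handling the $p_s=0$ case cleanly within whatever equilibrium conventions the paper adopts — in particular the remark in Section \ref{sec.ad} that ``a zero-priced good is allowed to be in surplus.'' That convention concerns surplus (supply exceeding demand), not unbounded demand, so it does not rescue the $p_s=0$ case; but I would want to state explicitly that an agent's demand set is required to be nonempty (optimal bundles exist) at equilibrium, and that $A_s$'s demand set is empty when $p_s=0$ because her utility is unbounded on her budget set. Once that is pinned down, the argument is short. A secondary point worth a sentence: this lemma is proved for the market $\CM$ under construction, so strictly speaking it holds once $\CM$ is fully specified; but since adding more agents and goods only enlarges the set of goods $A_s$ could buy (never shrinks her budget, which stays $p_s$), the argument is robust to the rest of the construction and I would note that the reasoning uses only $A_s$'s own data.
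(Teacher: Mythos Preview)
Your proposal is correct and follows essentially the same approach as the paper's proof: argue $p_s>0$ by noting that $A_s$ would demand an infinite amount of the free good $G_s$, then deduce $x_{ss}=W_{ss}$ from the budget equation $x_{ss}p_s = W_{ss}p_s$. The paper's version is considerably terser (two sentences), but your extra care about the zero-price convention and the well-definedness of optimal bundles is reasonable commentary rather than a different argument.
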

\begin{proof}
At an equilibrium if $p_s=0$, then agent $A_s$ will demand infinite amount of good $s$, a contradiction.
The second part follows using the fact that at any given prices, $A_s$ wants to buy only good $s$, and has exactly $W_{ss}p_s$
amount of money to spend. 
\end{proof}

Since a price $p_j$ may be used in multiple relations of $R'(F)$, the corresponding good has to be used in many different gadgets. 
When we combine all these gadgets to form market $\CM$, 
the biggest challenge is to analyze the flow of goods among these gadgets at equilibrium. We overcome this all together by forming {\em
closed submarket} for each gadget. 

\begin{definition}[submarket]
A submarket $\widetilde{\CM}$ of a market $\CM$ consists of a subset of agents and goods such that the
endowment and utility functions of agents in $\widetilde{\CM}$ and the production functions of firms in $\widetilde{\CM}$
are defined over goods only in $\widetilde{\CM}$. 
\end{definition} 

\begin{definition}[closed submarket]\label{def.csm}
A submarket $\widetilde{\CM}$ of a market $\CM$ is said to be closed if at every equilibrium of the entire market $\CM$, the
submarket $\widetilde{\CM}$ is locally at equilibrium, i.e., its total demand equals its total supply. The total demand of
$\widetilde{\CM}$ is the sum of demands of agents in $\widetilde{\CM}$ and its total supply is the sum of initial endowments
of agents in $\widetilde{\CM}$.  
\end{definition}

In other words, $\widetilde{\CM}$ does not interfere with the rest of market in terms of supply and demand, even if some goods in
$\widetilde{\CM}$ are used outside as well. Note that the market of (\ref{eq.sms}) is a closed submarket (due to Lemma \ref{lem.sms})
with only one agent and one good, namely $A_s$ and $G_s$ respectively.
We will see that the submarket $\widetilde{\CM}$ establishing a relation of type $(EQ.)$, $(LIN.)$ and $(QD.)$ has
a set of {\em exclusive} goods used only in $\widetilde{\CM}$, in order to achieve the closed property. 
Before describing construction of closed submarkets for more involved relations, we first describe it for a simple and important {\em
equality} relation. Furthermore, we will use equality to construct closed markets for $(QD.)$. 

Let there be $K$ relations in $R'(F)$, numbered from $1$ to $K$, and let $\CM_r$ denote the closed submarket establishing relation
$r \in[K]$. 
\subsubsection{Submarket for relation $(EQ.)\ p_a=p_b$}\label{sec.eq}

The gadget for $(EQ.)$ consists of two agents with Leontief utility functions, as given in Table
\ref{tab.eq}, where good  $G_{r}$ is exclusive to this submarket.

\begin{table}[!h]
\caption{Closed submarket $\CM_r$ for $r^{th}$ relation $p_a=p_b$}\label{tab.eq}
\begin{center} 
\begin{tabular}{|ll|}
\hline
$\CM_{EQ}$: & 2 Agents $(A_{1}, A_{2})$ and 3 Goods $(G_a, G_b, G_{r})$ {\scriptsize // $G_{r}$: an exclusive
good}\Y\M\\
& $A_{1}$: $W_{1}=(0,1,1)$ and $U_{1}(\xx) = \min\{x_a, x_{r}\}$\M\\ 
& $A_{2}$: $W_{2}=(1,0,1)$ and $U_{2}(\xx) = \min\{x_b, x_{r}\}$\M\\ 
\hline 
\end{tabular} 
\end{center} 
\end{table}

In $\CM_r$, the endowment vector $W_{i}$'s should be interpreted as (amount of $G_a$, amount of $G_b$, amount of
$G_{r}$), i.e., in the same order of goods as listed on the first line of the table; we use similar representation in the
subsequent constructions.

\begin{lemma}\label{lem.eq}
Consider the market $\CM_r$ of Table \ref{tab.eq}.
\begin{itemize}
\item $\CM_r$ is a closed submarket. 
\item At equilibrium, $\CM_r$ enforces $p_a = p_b$.
\item Every non-negative solution of $p_a = p_b$ gives an equilibrium of $\CM_r$. 
\end{itemize} 
\end{lemma}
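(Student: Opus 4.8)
The plan is to analyze the behavior of the two agents $A_1$ and $A_2$ at an arbitrary equilibrium of the full market $\CM$, using the Leontief optimal-bundle formula \eqref{eq.leon}, and exploiting that good $G_r$ appears only in this submarket. First I would establish that $p_r > 0$ at every equilibrium: since $A_1$ and $A_2$ each bring one unit of $G_r$ and each desires $G_r$ in positive proportion, if $p_r = 0$ then (provided their incomes are positive, which holds because they each own a unit of $G_r$ only if $p_r>0$ — so one must argue more carefully) demand for $G_r$ would be unbounded; the clean way is to note that $G_r$ is owned \emph{only} by $A_1$ and $A_2$ and desired \emph{only} by them, so market clearing for $G_r$ forces the submarket to be self-contained in $G_r$, and a zero price makes their demand for $G_r$ infinite unless their income is zero, but their income includes the value of one unit each of $G_a$ resp.\ $G_b$ — this needs the side observation that $p_a, p_b$ cannot both be zero (if they were, $A_1,A_2$ would have zero income and zero demand, and the constraints from $R'(F)$ elsewhere, in particular the chain back to $G_s$ with $p_s>0$ from Lemma \ref{lem.sms}, prevent a globally zero price vector; alternatively one restricts to the normalization where this does not arise). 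Granting $p_r>0$, the closed-submarket claim follows: by \eqref{eq.leon}, $A_1$ consumes $x_a = x_r = \beta_1$ with $\beta_1 = (p_b + p_r)/(p_a + p_r)$, and $A_2$ consumes $x_b = x_r = \beta_2$ with $\beta_2 = (p_a + p_r)/(p_b + p_r)$. Market clearing for $G_r$ reads $\beta_1 + \beta_2 = 2$ (total endowment of $G_r$), i.e.\ $\beta_1 + 1/\beta_1 = 2$, whose only solution is $\beta_1 = 1$, hence $\beta_2 = 1$ as well. Then $A_1$ consumes exactly one unit of $G_a$ and one of $G_r$, and $A_2$ exactly one of $G_b$ and one of $G_r$ — which is precisely the union of their endowments — so the submarket clears internally; this proves the first bullet.

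For the second bullet, $\beta_1 = 1$ means $(p_b + p_r)/(p_a + p_r) = 1$, i.e.\ $p_a = p_b$, which is the desired relation. For the third bullet, given any non-negative $p_a = p_b$, I would set $p_r$ to any positive value (say $p_r = p_a$, or $p_r=1$), put $\beta_1 = \beta_2 = 1$, and check directly that the resulting bundles $\xx_1 = (1,0,1)$, $\xx_2 = (0,1,1)$ are optimal for the respective agents at these prices (they exhaust each agent's budget and lie on the Leontief ray) and that supply equals demand for each of $G_a, G_b, G_r$ within the submarket; this is the routine verification that \eqref{eq.leon} is satisfied.

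The main obstacle I anticipate is the degenerate case $p_a = p_b = 0$ (and more generally handling zero prices cleanly), because then incomes vanish, demands are not pinned down by \eqref{eq.leon}, and the ``closed submarket'' conclusion could fail locally unless one invokes the global structure of $\CM$ — in particular the link, through the eventual chain of relations in $R'(F)$, to the special good $G_s$ with $p_s > 0$ (Lemma \ref{lem.sms}) and the lower-bound constraints $z_j = s^l_j + L_j$. The honest treatment either (a) defers this to the global correctness proof of $\CM$, where one shows no equilibrium price can be zero, or (b) builds enough into each gadget (e.g.\ a sliver of $G_s$ or $G_r$ in every agent's endowment) to force all prices positive locally. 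I expect the paper takes route (a), so in the lemma's proof I would state the positivity of the relevant prices as following from the overall construction and focus the argument on the $\beta_1 + 1/\beta_1 = 2 \Rightarrow \beta_1 = 1$ computation, which is the real content.
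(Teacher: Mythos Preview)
Your proposal is correct and follows essentially the same argument as the paper: compute the utility levels $\beta_1,\beta_2$ from the Leontief budget identities, use that $G_r$ is exclusive to the gadget to bound their sum, and conclude $\beta_1=\beta_2=1$, hence $p_a=p_b$. The only difference is that the paper writes the $G_r$ constraint as the inequality $\alpha+\beta\le 2$ (valid even if $p_r=0$) and then uses $\alpha\beta=1$ to force $(\alpha-1)^2\le 0$, thereby sidestepping your separate argument that $p_r>0$; otherwise the proofs coincide, including the choice $p_r=1$ for the converse direction.
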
 

\begin{proof}
Let $\alpha$ and $\beta$ denote the utility obtained by $A_{1}$ and $A_{2}$ at equilibrium respectively. Then using (\ref{eq.leon})
which characterizes optimal bundles for Leontief functions, the market clearing conditions of the two agents give: 
\[ p_b + p_{r} = \alpha(p_a + p_{r}) \ \ \ \mbox{and} \ \ \ p_a + p_{r} = \beta(p_b + p_{r}) \enspace . \] 

Clearly, the above conditions imply that $\alpha\beta=1\Rightarrow \beta=\nfrac{1}{\alpha}$.
Note that $A_{1}$ and $A_{2}$ consume $\alpha$ and $\beta$ amounts of good $G_{r}$ respectively. And since this good is
exclusive to $\CM_r$, no other agent will consume it.
Further, there are exactly two units of $G_{r}$ available in the entire market $\CM$. Hence we get, 
\[ \alpha + \beta \le 2\enspace . \]

Replacing $\beta=\frac{1}{\alpha}$ in the above condition gives $(\alpha-1)^2\le 0 \Rightarrow \alpha=\beta=1$.
Therefore, we get that every equilibrium of $\CM_r$ enforces 
$p_a+p_{r}=p_b+p_{r} \Rightarrow p_a=p_b$. Further, $\CM_r$ is a closed submarket because at equilibrium, demand of every good {\em in
$\CM_r$} is equal to its supply {\em in $\CM_r$} even though every good except $G_{r}$ might participate in the rest of the
market as well. 

For the last part, if $p_a = p_b \ge 0$, then choosing $p_{r}=1$, and $x_{1a}=x_{1r}=x_{2b}=x_{2r}=1$ gives a
market equilibrium of $\CM_r$.
\end{proof}

\subsubsection{Submarket for relation $(LIN.)\ p_a = Bp_b+ Cp_c+Dp_s$}\label{sec.lin}

The gadget for $(LIN.)$ is an extension of $(EQ.)$ having two agents with Leontief utility functions, as given in Table
\ref{tab.lin}, where $B,C,D\ge 0$.

\begin{remark}
For simplicity, we denote agents of each submarket by $A_1, A_2, \cdots$, and sometimes exclusive goods by $G_1, G_2,
\cdots$, however they are different across submarkets.  
\end{remark}
\begin{table}[!h]
\caption{$\CM_r$: Closed market for $r^{th}$ relation $p_a = Bp_b+ Cp_c+Dp_s,\ B,C,D\ge0$}\label{tab.lin}
\begin{center} 
\begin{tabular}{|ll|}
\hline
$\CM_r$: & 2 Agents $(A_{1}, A_{2})$ and 5 Goods $(G_a, G_b, G_c, G_s, G_{r})$ {\scriptsize // $G_{r}$: an exclusive
good}\Y\M\\
& $A_{1}$: $W_{1}=(1,0,0,0,1)$ and $U_{1}(\xx) = \min\{\frac{x_b}{B}, \frac{x_c}{C}, \frac{x_s}{D}, x_{r}\}$\M\\ 
& $A_{2}$: $W_{2}=(0,B,C,D,1)$ and $U_{2}(\xx) = \min\{x_a, x_{r}\}$\M\\ 
\hline 
\end{tabular} 
\end{center} 
\end{table}

\begin{lemma}\label{lem.lin}
Consider the market $\CM_r$ of Table \ref{tab.lin} with $B,C,D\ge 0$.
\begin{itemize}
\item $\CM_r$ is a closed submarket. 
\item At equilibrium, $\CM_r$ enforces $p_a = B p_b + Cp_c +D p_s$.
\item Every non-negative solution of $p_a = Bp_b + Cp_c+Dp_s$ gives an equilibrium of $\CM_r$. 
\end{itemize} 
\end{lemma}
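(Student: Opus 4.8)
The plan is to mirror the proof of Lemma~\ref{lem.eq} for the $(EQ.)$ gadget, since the $(LIN.)$ gadget is structurally the same with agent $A_1$'s demand for a single good $G_a$ replaced by a bundle $(G_b,G_c,G_s)$ in proportions $(B,C,D)$, and the two agents swapped in roles. First I would let $\alpha$ and $\beta$ be the utility levels of $A_1$ and $A_2$ at an arbitrary equilibrium of $\CM$, and write down the market-clearing (budget) equations from (\ref{eq.leon}). For $A_1$, spending her whole income $p_a + p_r$ on $\alpha$ units of her bundle costs $\alpha(B p_b + C p_c + D p_s + p_r)$; for $A_2$, spending her income $B p_b + C p_c + D p_s + p_r$ on $\beta$ units of $\{G_a,G_r\}$ costs $\beta(p_a + p_r)$. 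Multiplying the two relations gives $\alpha\beta = 1$.

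Next I would invoke the exclusivity of $G_r$: exactly two units of $G_r$ exist in all of $\CM$, and only $A_1$ and $A_2$ demand it (one unit of utility consumes one unit of $G_r$ for each), so $\alpha + \beta \le 2$. Combining with $\beta = 1/\alpha$ yields $(\alpha - 1)^2 \le 0$, hence $\alpha = \beta = 1$. Substituting $\alpha = 1$ into $A_1$'s budget equation gives $p_a + p_r = B p_b + C p_c + D p_s + p_r$, i.e. $p_a = B p_b + C p_c + D p_s$, which is the enforced relation. Closedness then follows exactly as before: with $\alpha=\beta=1$, the total demand of $\{A_1,A_2\}$ for each of $G_a, G_b, G_c, G_s$ equals what the pair brought (one unit of $G_a$ from $A_1$, and $B,C,D$ units of $G_b,G_c,G_s$ from $A_2$), and both units of the exclusive good $G_r$ are consumed inside; so demand equals supply within $\CM_r$, even though $G_a,\dots,G_s$ may circulate elsewhere in $\CM$.

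For the converse (third bullet), given any non-negative $\pp$ with $p_a = B p_b + C p_c + D p_s$, I would set $p_r = 1$ and check that the bundles $x_{1b}=B,\ x_{1c}=C,\ x_{1s}=D,\ x_{1r}=1$ for $A_1$ and $x_{2a}=1,\ x_{2r}=1$ for $A_2$ are optimal (each agent's income exactly buys one unit of utility, given the budget identity) and that the two agents' demands for $G_a,G_b,G_c,G_s,G_r$ balance against their endowments; this certifies a local equilibrium of $\CM_r$.

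I do not expect a serious obstacle here; the only point requiring a little care is the degenerate case where some of $B,C,D$ are zero (so a coordinate of $A_1$'s Leontief bundle drops out) or where $p_b=p_c=p_s=0$, making $A_1$'s desired bundle cost zero — but since Lemma~\ref{lem.sms} guarantees $p_s > 0$ at every equilibrium of $\CM$, and $D$ may be assumed positive (or handled by noting $A_1$ still consumes $\alpha$ units of $G_r$ regardless), the argument $\alpha+\beta\le 2$ and $\alpha\beta=1$ goes through unchanged. The remaining bookkeeping is routine and parallels Lemma~\ref{lem.eq} verbatim.
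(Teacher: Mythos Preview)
Your proposal is correct and follows essentially the same approach as the paper: you derive the two budget equations from (\ref{eq.leon}), obtain $\alpha\beta=1$, use exclusivity of $G_r$ to get $\alpha+\beta\le 2$, conclude $\alpha=\beta=1$, and then read off the relation and closedness; the converse construction with $p_r=1$ is also identical. Your extra remark about degenerate cases (some of $B,C,D$ zero, and $p_s>0$ via Lemma~\ref{lem.sms}) is a small addition the paper omits, but does not change the argument.
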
 

\begin{proof}
The proof is similar to that of Lemma \ref{lem.eq}. 
Let $\alpha$ and $\beta$ denote the utility obtained by $A_{1}$ and $A_{2}$ at equilibrium respectively. Then using (\ref{eq.leon})
together with market clearing conditions of the two agents, we get: 

\[ p_a + p_{r} = \alpha(Bp_b + Cp_c + Dp_s+ p_{r}) \ \mbox{ and }\  
 Bp_b + Cp_c + Dp_s+p_{r} = \beta(p_a + p_{r})\enspace . \] 

Clearly, the above conditions imply $\alpha\beta=1\Rightarrow \beta=\nfrac{1}{\alpha}$. Since $G_{r}$ is exclusive to this market, 
using similar argument as the proof of Lemma \ref{lem.eq} we get that $\alpha +\beta\le 2$. This together with
$\beta=\nfrac{1}{\alpha}$ gives $\alpha=\beta=1$.
Thus every equilibrium of $\CM_r$ enforces $p_a +p_{r}= Bp_b + Cp_c+Dp_s+p_{r} \Rightarrow p_a = Bp_b +
Cp_c+Dp_s$. Hence, $\CM_r$ is a closed submarket.

For the last part, if $p_a = Bp_B + Cp_c+Dp_s \ge0$, then setting $p_{r}=1$, $x_{1r}=x_{2r}=x_{2a}=1$, and 
$x_{1b}=B, x_{1,c}=C, x_{1s}=D$ gives a market equilibrium of $\CM_r$.
\end{proof} 

Using Lemma \ref{lem.lin}, we easily get the following. 

\begin{corollary}
There is a simple closed submarket to establish any linear relation of form $p_a = E_1p_{b_1} + \dots + E_np_{b_n}+E_0 p_s$ for any
$n\ge 1$, where $E_0,E_1, \dots, E_n$ are non-negative rational constants. 
\end{corollary}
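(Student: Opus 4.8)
The plan is to reuse the two-agent gadget of Table~\ref{tab.lin} essentially verbatim, simply letting the ``right-hand side'' agent hold endowments of all $n+1$ goods $G_{b_1},\dots,G_{b_n},G_s$ in the proportions $E_1,\dots,E_n,E_0$. Concretely, in the submarket $\CM_r$ introduce one exclusive good $G_r$ together with two agents: $A_1$ with endowment one unit of $G_a$ and one unit of $G_r$ and Leontief utility $\min\{x_{b_1}/E_1,\dots,x_{b_n}/E_n,x_s/E_0,x_r\}$; and $A_2$ with endowment $E_i$ units of $G_{b_i}$ for each $i\in[n]$, $E_0$ units of $G_s$, and one unit of $G_r$, and Leontief utility $\min\{x_a,x_r\}$. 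If some $E_i$ (or $E_0$) equals $0$, I would simply drop the corresponding good from both the endowment of $A_2$ and the utility of $A_1$, which is exactly the relation with that term removed; the fully degenerate case $p_a=0$ is handled by a single agent owning one unit of $G_a$ that she does not want (or is excluded, since $n\ge1$ with a positive coefficient).

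First I would write down the two market-clearing equations exactly as in the proof of Lemma~\ref{lem.lin}, using the optimal-bundle characterization~(\ref{eq.leon}): if $\alpha,\beta$ denote the utilities of $A_1,A_2$ at equilibrium, then $p_a+p_r=\alpha(\sum_i E_ip_{b_i}+E_0p_s+p_r)$ and $\sum_i E_ip_{b_i}+E_0p_s+p_r=\beta(p_a+p_r)$, whose product gives $\alpha\beta=1$. Next I would invoke exclusivity of $G_r$: since $A_1$ and $A_2$ consume $\alpha$ and $\beta$ units of $G_r$ respectively, no other agent touches $G_r$, and there are exactly two units of $G_r$ in $\CM$, so $\alpha+\beta\le2$; with $\beta=1/\alpha$ this forces $\alpha=\beta=1$. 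Substituting back yields $p_a=\sum_i E_ip_{b_i}+E_0p_s$, and because at such an equilibrium the demand for every good of the gadget equals its supply within the gadget (the only good that could leak out, $G_r$, being fully consumed internally), $\CM_r$ is a closed submarket. For the converse I would exhibit, for any non-negative $\pp$ satisfying the relation, the explicit equilibrium $p_r=1$, $x_{1r}=x_{2r}=x_{2a}=1$, $x_{1b_i}=E_i$, $x_{1s}=E_0$, and verify affordability and clearing; this is identical to Lemma~\ref{lem.lin}.

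There is essentially no real obstacle here, since the statement follows from Lemma~\ref{lem.lin} by a direct generalization; the two points deserving a moment's care are (i) the zero-coefficient bookkeeping above, needed so that the Leontief utility of $A_1$ stays well defined, and (ii) checking that the counting argument for $G_r$ still yields exactly $\alpha+\beta\le 2$ (two units total, all other agents indifferent to $G_r$), which is what pins down $\alpha=\beta=1$ regardless of $n$. The gadget uses only two agents and a single exclusive good for any $n$, which is the sense in which the closed submarket is ``simple''.
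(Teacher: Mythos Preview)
Your proposal is correct and matches the paper's approach exactly: the paper states the corollary as an immediate consequence of Lemma~\ref{lem.lin} without further proof, and the direct generalization of the two-agent gadget in Table~\ref{tab.lin} that you spell out is precisely what is intended. Your handling of zero coefficients and the $\alpha+\beta\le 2$ argument are the right details to fill in.
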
 

\subsubsection{Submarket for relation $(QD.)\ p_a = \frac{p_bp_c}{p_s}$}\label{sec.qd}

In this section, we derive a closed submarket for establishing the $(QD.)$ relation. In order to simplify the market construction,
which is quite involved, we first make the following two assumptions, which are removed later.  First, that $p_s = 1$ and second, that
$p_b \neq 0$. The first assumption violates the scale invariance of prices, see Section \ref{sec.scale}, but simplifies the relation needed
to $p_a = p_b p_c$. The second assumption ensures that no agent can demand an infinite amount of a good of price $p_b$ (Note that in
the reduction, since the price of a good corresponds to the probability of playing a certain strategy, eventually we do need to allow
for $p_b = 0$.).

The main idea for enforcing simpler relation, $p_a = p_b p_c$, is to ensure that there is an agent $A$ whose initial endowment is
one unit of Good 1 priced at $p_a$, and she desires to consume only Good $2$ priced at $p_b$. The left over amount of Good $2$ after
everyone, except agent $A$, consume is exactly $p_c$. Since $p_b>0$, agent $A$ has to buy all of this left over amount which requires
her to spend $p_b p_c$. On the other hand her earning from the sell of Good $1$ is $p_a$, implying $p_a=p_b p_c$. Figure
\ref{fig.sample} illustrates the idea.  The difficulty in implementing this idea lies in the fact that
$p_b$ and $p_c$ are variables; if they were both constants, the construction of the submarket would have been easy. 

\begin{figure}[htbp]
   \centering
  \includegraphics[width=0.4\textwidth]{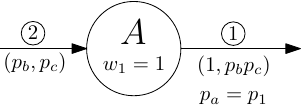}
 \caption{The main idea for enforcing relation $p_a=p_bp_c$. Wires are numbered in circle, and wire $i$ carries
 good $G_i$.  The tuple on each wire represents (amount, price).}\label{fig.sample}
\end{figure}   

In order to present the submarket in a modular manner, we will first define some devices.  Each of these devices will be implemented
via a set of agents with Leontief utility functions. Each device ensures a certain relationship between the net endowment left over by
these agents and the net consumption of these agents; for convenience, we will call these the {\em net endowment and net consumption of
the device}. Clearly, at equilibrium prices, for each device, the total worth of its
net endowment and net consumption must be equal.  
\medskip

{\bf Submarkets for the devices.} In this section, we show implementation of three devices to be used in the submarket for $(QD.)$ relation. 
\medskip

{\bf Converter (Conv($q$)):}  
The net consumption of this device is one unit of good $G_1$, whose price is $p$, and the net endowment is $p/q$ units of
good $G_2$, whose price is $q$. 
Table \ref{tab.conv} and Figure \ref{fig.conv} illustrate the implementation. In the figure tuple on edges represent {\em
(amount, price)} of the goods whose number is shown in circle. Table \ref{tab.conv} has two parts: Part 1 describes the market
and Part 2 enforces linear relations among prices using the submarkets described in Sections \ref{sec.eq} and \ref{sec.lin}. 

\begin{figure}[!htbp]
   \centering
  \includegraphics[width=0.7\textwidth]{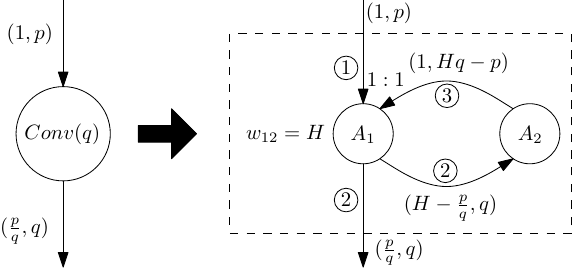}
 \caption{Flow of goods in Part 1 of Table \ref{tab.conv} for Conv($q$). Wires are numbered in circle, and wire $i$ carries
 good $G_i$.  The tuple on each wire represents (amount, price).}\label{fig.conv}
\end{figure}

\begin{table}[!ht]
\caption{A closed submarket for Conv($q$)}
\begin{center} 
\begin{tabular}{|ll|}
\hline
\multirow{5}{*}{\bf \color{blue}Part 1:} & {\bf Input:} $1$ unit of $G_1$ at price $p$ \Z\\
& {\bf Output:} $p/q$ units of $G_2$ at price $q$\\
& 2 Agents $(A_{1}, A_{2})$, 3 goods $(G_1, G_2, G_3)$\\ 
& $A_{1}$: $W_{1 2}=H$ and $U_1(\xx) = \min\{x_{1}, x_3\}$\\ 
& $A_{2}$: $W_{2 3}=1$ and $U_2(\xx)=x_{2}$ \B\\ 
\hline
\multirow{3}{*}{\bf \color{blue}Part 2:} & Closed submarkets for the following linear relations \Z\\
& $p_{2} = q$\\
& $p_{3} = Hq-p$ \B\\ 
\hline 
\end{tabular} 
\label{tab.conv}
\end{center} 
\end{table}

There are two agents $A_1$ and $A_2$, and three goods $G_1, G_2$ and $G_3$. The endowment of $A_1$ is $H$ units of $G_2$,
whose price is set to $q$. Recall that $H$ is a constant defined in Section \ref{sec.scale}. $A_1$ likes to consume $G_1$
and $G_3$ in the ratio of 1:1. The net consumption of this device, i.e., one unit of $G_1$ at price $p$, is consumed by
$A_1$. Agent $A_2$'s endowment is one unit of $G_3$, whose price is set to $Hq - p$.  $A_2$ wants to consume $G_2$,
whose price is $q$. Hence, it consumes $H - p/q$ units of $G_2$ (observe that there is no need to perform the division
involved in $p/q$ explicitly). The remaining $p/q$ units of $G_2$ form the net endowment of the device, as required. 
\medskip

{\bf Combiner (Comb($l,p_a,p_b$)):}  
The net consumption of this device is $l$ units each of goods $G_1$ and $G_2$, whose prices are $p_a$ and $p_b$, respectively. The
net endowment is $l$ units of a good $G_3$, whose price is $p_a + p_b$. Table \ref{tab.comb} and Figure \ref{fig.comb}
illustrate the implementation.

\begin{figure}[!htbp]
   \centering
  \includegraphics[width=\textwidth]{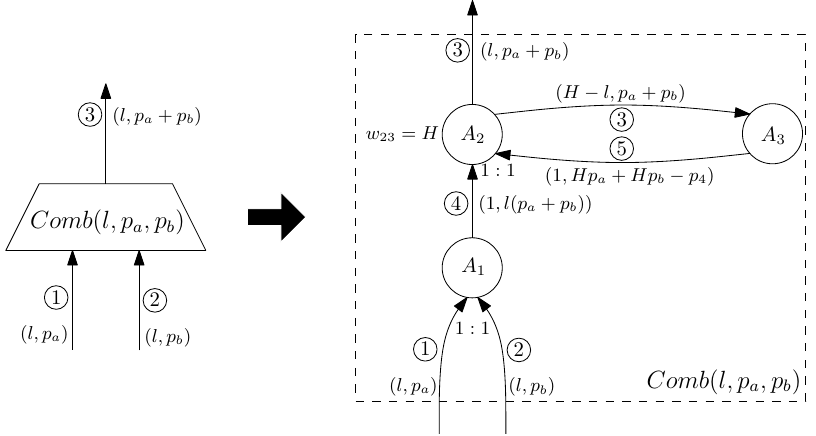}
 \caption{Flow of goods in Part 1 of Table \ref{tab.comb} for Comb($l,p_a,p_b$). Wires are numbered, and wire $i$ carries good $G_i$. 
 The tuple on each wire represents (amount, price).}\label{fig.comb}
\end{figure}

\begin{table}[!hb]
\caption{A closed submarket for Comb($l,p_a,p_b$)}
\begin{center} 
\begin{tabular}{|ll|}
\hline
\multirow{6}{*}{\bf \color{blue}Part 1:} & {\bf Input:} $l$ units of $G_1$ and $G_2$ at price $p_a$ and $p_b$ respectively \Z\\
& {\bf Output:} $l$ units of $G_3$ at price $p_a+p_b$\\
& 3 Agents $(A_{1}, A_{2}, A_3)$, 5 goods $(G_1, G_2, G_3, G_4, G_5)$\\ 
& $A_{1}$: $W_{1 4}=1$ and $U_1(\xx) = \min\{x_{1}, x_2\}$\\ 
& $A_{2}$: $W_{2 3}=H$ and $U_2(\xx) = \min\{x_{4}, x_5\}$\\ 
& $A_{3}$: $W_{3 5}=1$ and $U_2(\xx)=x_{3}$\B\\ 
\hline
\multirow{3}{*}{\bf \color{blue}Part 2:} & Closed submarkets for the following linear relations \Z\\
& $p_{3} = p_a+p_b$ \\ 
& $p_{5} = Hp_a+Hp_b-p_4$ \B\\ 
\hline 
\end{tabular} 
\label{tab.comb}
\end{center} 
\end{table}

Agent $A_1$ wants $G_1$ and $G_2$ in the ratio 1:1, and no other agent wants these goods. Therefore, $A_1$
will consume all of the available $G_1$ and $G_2$ and hence the price of her endowment, i.e., one unit of $G_4$, will be
$l(p_a + p_b)$ (observe that there is a multiplication involved in this price; however, it is not performed explicitly). 

Agent $A_2$ wants $G_4$ and $G_5$ in the ratio 1:1. The price of $A_3$'s endowment, i.e., one unit of $G_5$ is set to 
$H(p_a + p_b) - p_4$. Hence the endowment of $A_2$, i.e., $H$ units of $G_3$, has a price of $(p_a + p_b)$. Of this, $A_3$
must consume $(H - l)$, leaving $l$ amount of $G_3$ as the net endowment of this device.
\medskip

{\bf Splitter (Spl($l,p_a,p_b$)):}  
The net endowment of this device is $l$ units each of two goods $G_2$ and $G_3$, whose prices are $p_a$ and $p_b$, respectively. The
net consumption is $l$ units of Good $1$, whose price is $p_a + p_b$. Table \ref{tab.spl} and Figure \ref{tab.spl} illustrate
the implementation.

\begin{figure}[!htbp]
   \centering
  \includegraphics[width=\textwidth]{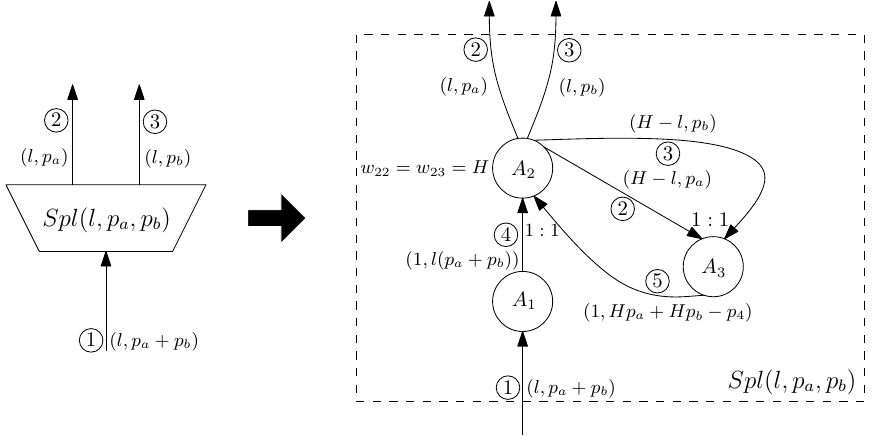}
 \caption{Flow of goods in Part 1 of Table \ref{tab.spl} for Spl($l,p_a,p_b$). Wires are numbered, and wire $i$ carries good $G_i$. 
 The tuple on each wire represents (amount, price).}\label{fig.spl}
\end{figure}

\begin{table}[!h]
\caption{A closed submarket for Spl($l,p_a,p_b$)}
\begin{center} 
\begin{tabular}{|ll|}
\hline
\multirow{6}{*}{\bf \color{blue}Part 1:} & {\bf Input:} $l$ units of $G_1$ at price $p_a+p_b$ \Z\\
& {\bf Output:} $l$ units of $G_2$ and $G_3$ at price $p_a$ and $p_b$ respectively\\
& 3 Agents $(A_{1}, A_{2}, A_3)$, 5 goods $(G_1, G_2, G_3, G_4, G_5)$\\ 
& $A_{1}$: $W_{1 4}=1$ and $U_1(\xx) = x_{1}$\\ 
& $A_{2}$: $W_{2 2}=W_{2 3}=H$ and $U_2(\xx) = \min\{x_{4}, x_5\}$\\ 
& $A_{3}$: $W_{3 5}=1$ and $U_2(\xx)=\min\{x_2,x_{3}\}$ \B\\ 
\hline
\multirow{2}{*}{\bf \color{blue}Part 2:} & Closed submarkets for the following linear relation \Z\\
& $p_2=p_a$ \\
& $p_3=p_b$ \\
& $p_{5} = Hp_a+Hp_b-p_4$ \B\\ 
\hline 
\end{tabular} 
\label{tab.spl}
\end{center} 
\end{table}

Good $G_1$ is desired only by Agent $A_1$. Hence,
the price of her initial endowment, i.e., one unit of $G_4$, is forced to be $l(p_a + p_b)$ (observe that the multiplication
involved is not done explicitly). Agent $A_2$ wants goods $G_4$ and $G_5$ in the ratio 1:1. The price of $G_5$ is set
explicitly to $H(p_a + p_b) - p_4$. The endowment of $A_2$ is $H$ units each of $G_2$ and $G_3$, whose prices have been set
to $p_a$ and $p_b$, respectively. Agent $A_3$ wants these two goods in the ratio 1:1, and because of the setting of the
price of her initial endowment, she must consume $(H-l)$ units of each of these two goods. The remaining amounts, i.e., $l$
each, form the net endowment of the device, as required.
\medskip

{\bf Submarket construction for $p_a = {p_bp_c}$.} Now we are ready to describe a closed submarket that enforces $p_a = {p_bp_c}$.
Consider the submarket given in Table \ref{tab.qd1}. 
In this market, the 7 goods, $G_1, \ldots G_7$ are exclusive to the submarket; the price of good $G_j$ is $p_j$.
The prices of $G_1, G_2, G_4, G_5, G_6, G_7$ are set to appropriate using $(EQ.)$ and $(LIN.)$ relations and prices $p_a, p_b$
and $p_c$, as specified in the second part of the table. 
The submarket uses two Converters, one Combiner and one Splitter. Each of these devices is specified by giving its (net endowment, net
consumption).  Besides the agents needed to implement these devices, the submarket requires two additional agents, $A_1$ and
$A_2$.

\begin{table}[!h]
\caption{A closed submarket $\CM'_r$ that enforces $p_a = {p_bp_c}$}
\begin{center} 
\begin{tabular}{|ll|}
\hline
\multirow{7}{*}{\bf \color{blue}Part 1:} & 2 Agents $(A_{1}, A_{2})$, 2 Converters $(Conv_{1}, Conv_{2})$, \Z\\
& 1 Combiner $(Comb)$, 1 Splitter $(Spl)$, and 7 Goods $(G_{1}, \dots, G_{7})$ \B\\ 
& $A_{1}$: $W_{1 1}=1$ and $U_1(\xx) = x_{4}$\M\\ 
& $A_{2}$: $W_{2 6}=1$ and $U_2(\xx)=x_{5}$\M\\ 
& $Conv_{1}= Conv(1)$: $(G_1, G_2)$ \M\\
& $Conv_{2}= Conv(p_b)$: $(G_6, G_7)$ \M\\ 
& $Comb(p_c,p_b,1)$:  $((G_2, G_7), G_3)$ \M \\ 
& $Spl(p_c,p_b,1)$:  $(G_3, (G_4,G_5))$ \M \B\\
\hline
\multirow{8}{*}{\bf \color{blue}Part 2:} & Closed submarkets for the following linear relations \Z \\
& $p_{1} = p_c$\\
& $p_{2} = 1$ \\ 
& $p_{4} = 1$ \\  
& $p_{5} = p_b$ \\
& $p_{6} = p_a$ \\
& $p_{7} = p_b$ \M\\
\hline 
\end{tabular} 
\label{tab.qd1}
\end{center} 
\end{table}

\begin{figure}[htbp]
   \centering
  \includegraphics[width=0.7\textwidth]{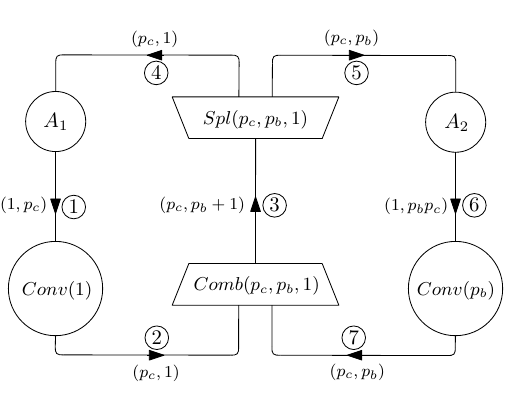}
 \caption{Flow of goods in Part 1 of Table \ref{tab.qd1}. Wires are numbered, and wire $i$ carries good $r_i$. 
 The tuple on each wire represents (amount, price).}\label{fig.qd}
\end{figure}

\begin{lemma}\label{lem.qd1}
The submarket given in Table \ref{tab.qd1} (and illustrated in Figure \ref{fig.qd}) enforces $p_a = {p_bp_c}$ and is closed at
equilibrium under the assumption $p_b \neq 0$.  
\end{lemma}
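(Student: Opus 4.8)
The plan is to trace the flow of goods through the four devices, using their device specifications (net endowment, net consumption) as black boxes, and then close the accounting with the two extra agents $A_1$ and $A_2$ together with the price settings in Part 2. First I would record the price assignments: $p_1 = p_c$, $p_2 = 1$, $p_4 = 1$, $p_5 = p_b$, $p_6 = p_a$, $p_7 = p_b$ (these hold at every equilibrium by the corresponding $(EQ.)$/$(LIN.)$ closed submarkets, Lemmas \ref{lem.eq} and \ref{lem.lin}); and recall we are working under the simplifying assumptions $p_s = 1$ and $p_b \neq 0$. Next I would unpack each device using the specs established above: $Conv_1 = Conv(1)$ with interface $(G_1,G_2)$ consumes one unit of $G_1$ (price $p_1 = p_c$) and produces $p_1/1 = p_c$ units of $G_2$; $Conv_2 = Conv(p_b)$ with interface $(G_6,G_7)$ consumes one unit of $G_6$ (price $p_6 = p_a$) and produces $p_a/p_b$ units of $G_7$ (price $p_7 = p_b$), where here the division is legitimate because $p_b\neq 0$; $Comb(p_c,p_b,1)$ with interface $((G_2,G_7),G_3)$ — wait, I need to check the label ordering against the Combiner spec — consumes $l$ units each of its two input goods and emits $l$ units of the sum-priced good $G_3$; and $Spl(p_c,p_b,1)$ with interface $(G_3,(G_4,G_5))$ consumes $l$ units of $G_3$ and emits $l$ units each of $G_4$ (price $p_4=1$) and $G_5$ (price $p_5 = p_b$).

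The heart of the argument is then a conservation count on each exclusive good $G_1,\dots,G_7$, treating $A_1$ (endowment one unit of $G_2$, wants only $G_1$) and $A_2$ (endowment one unit of $G_6$, wants only $G_5$) as the two ``ports.'' I would argue: $A_1$ brings one unit of $G_2$, and its value $p_2 = 1$ forces $A_1$ to buy exactly one unit of $G_1$ (since $p_1 = p_c$ and one unit of $G_1$ costs $p_c$; here I'd want $p_c \ne 0$, or handle $p_c = 0$ as a degenerate special case where everything is zero). So $l = 1$ is the amount flowing through the Combiner and Splitter. Then $Conv_1$ emits $p_c$ units of $G_2$ and $A_1$ emits its one unit of $G_2$; the Combiner absorbs exactly the $G_2$ and $G_7$ it needs. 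Tracking $G_7$: $Conv_2$ produces $p_a/p_b$ units of $G_7$, which must equal what the Combiner consumes. Tracking $G_3$: the Combiner's output feeds the Splitter exactly. Tracking $G_4$ and $G_5$: the Splitter emits these; $G_4$ (price $1$) closes against the Converter/Combiner wiring, and $G_5$ is consumed by $A_2$, whose budget is $p_6 \cdot 1 = p_a$ from selling its unit of $G_6$; since $p_5 = p_b$, agent $A_2$ buys $p_a/p_b$ units of $G_5$. Matching the Splitter's $G_5$ output (which is $l \cdot (\text{something})$) against $A_2$'s demand $p_a/p_b$, and matching the Combiner's $G_7$ demand against $Conv_2$'s output $p_a/p_b$, forces the relation $p_a/p_b$ to be consistent with the amount $p_c$ coming in on the $G_1$ side — i.e. $p_a/p_b = p_c$, hence $p_a = p_b p_c$. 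I would present this as a short chain of equalities rather than a case analysis, reading the amounts straight off Figure \ref{fig.qd}.

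Closedness follows the same template as in Lemmas \ref{lem.eq} and \ref{lem.lin}: every good $G_1,\dots,G_7$ is exclusive to $\CM'_r$ (and the auxiliary goods inside each device and inside the Part-2 linear gadgets are exclusive to those closed sub-submarkets, which are themselves closed by induction on the constructions already proven), so at any equilibrium of the full market the demand for each such good within $\CM'_r$ equals its supply within $\CM'_r$; combined with the device-level conservation, the total endowment of $\CM'_r$ equals its total consumption, which is exactly Definition \ref{def.csm}. The main obstacle I anticipate is purely bookkeeping discipline: the devices have interfaces with specific good-label orderings, and one must be careful that the amount $l$ propagated into $Comb(p_c,p_b,1)$ and $Spl(p_c,p_b,1)$ is genuinely pinned to $1$ by $A_1$'s budget (this is where $p_2 = 1$ and $p_4 = 1$ do their work) before the ``net endowment $= l(p_a{+}p_b)$''-type multiplications can be invoked. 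A secondary subtlety is the boundary case $p_c = 0$ (and recall $p_b\neq 0$ is assumed but $p_b$ may still be small): there $A_1$'s budget is $1$ but the price of $G_1$ is $0$, so $G_1$ may be in surplus; one checks that the only consistent equilibrium then has $p_a = 0 = p_b p_c$, so the relation still holds. I would relegate this degenerate check to a sentence at the end.
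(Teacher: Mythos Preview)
Your overall strategy --- trace the goods through the devices and close the accounting on each exclusive wire --- is exactly the paper's approach. The central quantitative step, however, is wrong: the amount $l$ flowing through the Combiner and Splitter is $p_c$, not $1$; this is literally the first parameter in the instantiations $Comb(p_c,p_b,1)$ and $Spl(p_c,p_b,1)$. Your supporting argument that $A_1$'s budget of $p_2=1$ ``forces $A_1$ to buy exactly one unit of $G_1$'' at price $p_1=p_c$ is arithmetically incorrect (it would buy $1/p_c$ units), and with $l=1$ your $G_2$ ledger cannot close: you have $Conv_1$ emitting $p_c$ units of $G_2$ plus another unit from $A_1$, against a Combiner that would absorb only one.

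The paper pins the amount down from the \emph{supply} side, not from $A_1$'s budget. The net $G_2$ available to the Combiner's first agent is exactly the $p_c$ units net-output by $Conv_1$; since that agent has Leontief utility in $G_2,G_7$ at ratio $1{:}1$ and both prices are strictly positive, she is forced to consume exactly $p_c$ of each. This propagates $p_c$ units through $G_3$ and out to $G_4,G_5$ via the Splitter. The desired relation then comes from $A_2$'s budget constraint, not from any matching against $l$: agent $A_2$ sells one unit of $G_6$ for $p_6=p_a$ and must absorb all $p_c$ leftover units of $G_5$ at price $p_5=p_b$, hence $p_a=p_b\,p_c$. Your closing line (``consistent with the amount $p_c$ coming in on the $G_1$ side'') gestures toward this, but without the correct identification $l=p_c$ the chain of equalities never actually connects, and the ``$l=1$'' paragraph would need to be discarded entirely.
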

\begin{proof}
Let $\pp$ be an equilibrium price vector of the entire market, where $p_b>0$. It is easy to see that prices $p_2, p_4, p_5$ and $p_7$ are
strictly greater than zero, so these have to be consumed completely. Further, since the $(EQ.)$ and $(LIN.)$ submarkets implementing the
second part of Table \ref{tab.qd1} are closed (Lemmas \ref{lem.eq} and \ref{lem.lin}), the net endowment of goods $G_1,\dots,G_7$
for agents in this submarket, including those in devices, is exactly what they bring. 

We have set $p_1 = p_c$ and the endowment of $A_1$ is one unit of $G_1$. Good $G_1$ is desired only by the first agent in
$Conv_1$;
hence its net consumption costs $p_c$. Furthermore, since the price of $G_2$ is set to $1$ and the parameter of $Conv_1$
is $1$, the net endowment of $Conv_1$ will be $p_c$ units of $G_2$. 

The first agent in $Comb$ wants $G_2$ and $G_7$ in the ratio 1:1 both of whose prices are positive.  Moreover, net
endowment $p_c$ of Good $G_2$ is desired only by this agent, and therefore using (\ref{eq.leon}) at equilibrium the agent
has to consume $p_c$ units of both the goods.  Since prices of $G_2$ and $G_7$ are $1$ and $p_b$ respectively, 
the price of the net endowment of $Comb$ will be $p_b + 1$, and the amount will be $p_c$. 

Thus, output of $Comb$ is $p_c$ amount of good $G_3$ priced at $p_b+1$, which has to be consumed by $Spl$.
The prices $p_4$ and $p_5$ are set to $1$ and $p_b$, respectively, thereby ensuring that the total worth of the net
endowment and net consumption of $Spl$ are equal. Finally, agent $A_2$ gets $p_c$ amount of $G_5$ whose price is $p_b$
and has one unit of $G_6$ as her endowment. Hence the price of $G_6$ must be $p_6 = p_a = p_b p_c$, as required. Good $G_6$
is only desired by the first agent in $Conv_2$. The net endowment of this device is $p_c$ units of $G_7$ whose price is set
to $p_b$, and this good is fully consumed by the first agent of $Comb$. 

Note that it may be possible that $p_c$ is zero which may force prices of some of $G_1,\dots,G_7$ goods to be zero, like
$G_1$. However, whoever consumes this good also want to consume another good with non-zero price, in the same proportion.
And therefore demand of no good will exceed the supply.  Since the total supply and demand of each of the seven goods are
equal, the submarket is closed at these (equilibrium) prices.
\end{proof}

Now we will modify the construction of Table \ref{tab.qd1} in order to remove the assumption $p_b>0$ and $p_s=1$. Consider the
implementation of Table \ref{tab.qd}.

\begin{table}[!h]
\caption{A closed submarket $\CM_r$ that enforces $p_a = \frac{p_bp_c}{p_s}$}
\begin{center} 
\begin{tabular}{|ll|}
\hline
\multirow{7}{*}{\bf \color{blue}Part 1:} & 2 Agents $(A_{1}, A_{2})$, 2 Converters $(Conv_{1}, Conv_{2})$, 
\Z\\ 
& 1 Combiner $(Comb)$, 1 Splitter $(Spl)$,  and 7 Goods $(G_{1}, \dots, G_{7})$ \B \\ 
& $A_{1}$: $W_{1 1}=1$ and $U_1(\xx) = x_{4}$\M\\ 
& $A_{2}$: $W_{2 6}=1$ and $U_2(\xx)=x_{5}$\M\\ 
& $Conv_{1} = Conv(p_s)$: $(G_1, G_2)$ \M\\
& $Conv_{2} = Conv(p_b+p_s)$: $(G_6, G_7)$ \M\\ 
& $Comb(\nfrac{p_c}{p_s},p_b+p_s,p_s)$:  $((G_2, G_7), G_3)$ \M \\ 
& $Spl(\nfrac{p_c}{p_s},p_b+p_s,p_s)$:  $(G_3, (G_4,G_5))$ \M\\
\hline
\multirow{8}{*}{\bf \color{blue}Part 2:} & Closed submarkets for the following linear relations \Z \\
& $p_{1} = p_c$\\
& $p_{2} = p_s$ \\ 
& $p_{4} = p_s$ \\  
& $p_{5} = p_b+p_s$ \\
& $p_{7} = p_b+p_s$ \\
& $p_a + p_c = p_6 $\\
\hline 
\end{tabular} 
\label{tab.qd}
\end{center} 
\end{table}

\begin{lemma}\label{lem.qd}
Consider the submarket $\CM_r$ of Table \ref{tab.qd},
\begin{itemize}
\item $\CM_r$ is a closed submarket. 
\item At equilibrium, $\CM_r$ enforces $p_a= \frac{p_bp_c}{p_s}$, and $\nfrac{p_c}{p_s}\le H$.  
\item Every non-negative solution of $p_a = p_bp_c$, where $p_s>0$ and $p_c\le H$, gives an equilibrium
of $\CM_r$ with $p_s=1$.  
\end{itemize} 
\end{lemma}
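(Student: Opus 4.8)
The plan is to reduce Lemma~\ref{lem.qd} to Lemma~\ref{lem.qd1} by a change of scale. Observe that $\CM_r$ of Table~\ref{tab.qd} is exactly $\CM'_r$ of Table~\ref{tab.qd1} with the role of the ``unit of worth'' played by $p_s$ instead of $1$: the prices $p_2,p_4$ are set to $p_s$ (rather than $1$), the converters use parameters $p_s$ and $p_b+p_s$ (rather than $1$ and $p_b$), the combiner and splitter use the argument $\nfrac{p_c}{p_s}$ (rather than $p_c$) and the prices $p_b+p_s$, $p_s$, and the final linear relation reads $p_6 = p_a+p_c$. First I would show $p_s>0$ at every equilibrium (Lemma~\ref{lem.sms}), so that the denominators $p_s$ and $p_b+p_s$ appearing as device parameters are strictly positive, and hence the devices Conv, Comb, Spl are well-defined and their net-endowment/net-consumption guarantees (from Tables~\ref{tab.conv}--\ref{tab.spl}) apply verbatim. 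I would also note that all the Part~2 relations are established by closed $(EQ.)$ and $(LIN.)$ submarkets (Lemmas~\ref{lem.eq}, \ref{lem.lin}), so the net endowment of each of $G_1,\dots,G_7$ for the agents inside $\CM_r$ is exactly what they bring in.

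Next I would trace the flow of worth exactly as in the proof of Lemma~\ref{lem.qd1}, carrying the scale factor $p_s$ through. Good $G_1$, priced $p_c$, is desired only by the first agent of $Conv_1=Conv(p_s)$, so the net consumption of $Conv_1$ costs $p_c$ and its net endowment is $\nfrac{p_c}{p_s}$ units of $G_2$ (price $p_s$). The first agent of $Comb$ wants $G_2,G_7$ in ratio $1{:}1$, and since the $\nfrac{p_c}{p_s}$ units of $G_2$ are wanted only by this agent, it consumes $\nfrac{p_c}{p_s}$ units of each; with prices $p_s$ and $p_b+p_s$ the output of $Comb$ is $\nfrac{p_c}{p_s}$ units of $G_3$ priced $p_b+2p_s$, which is consumed by $Spl(\nfrac{p_c}{p_s},p_b+p_s,p_s)$ whose input price $(p_b+p_s)+p_s$ matches. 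The splitter's net endowment is $\nfrac{p_c}{p_s}$ units of $G_4$ (price $p_s$, consumed inside $Spl$'s worth accounting) and $\nfrac{p_c}{p_s}$ units of $G_5$ (price $p_b+p_s$); agent $A_2$ consumes the latter, so the worth of its endowment $G_6$ is $p_6=\nfrac{p_c}{p_s}\cdot(p_b+p_s) = \nfrac{p_bp_c}{p_s}+p_c$. Combined with the Part~2 relation $p_6=p_a+p_c$ this yields $p_a=\nfrac{p_bp_c}{p_s}$. Good $G_6$ (price $p_a+p_c$) is desired only by the first agent of $Conv_2=Conv(p_b+p_s)$, whose net endowment is $\nfrac{p_a+p_c}{p_b+p_s} = \nfrac{p_c}{p_s}$ units of $G_7$, exactly what the first agent of $Comb$ needs. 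This closes the accounting: demand equals supply for each of $G_1,\dots,G_7$, so $\CM_r$ is closed. The side bound $\nfrac{p_c}{p_s}\le H$ follows because $p_c\le H$ (Lemma~\ref{lem.RF}) while at an equilibrium with $p_s=1$ we are in the regime of interest; more carefully, the converter with parameter $p_s$ only consumes worth $p_c$ from an agent whose endowment $G_2$ has total worth $H p_s$ (the converter's internal agent $A_1$ holds $H$ units of $G_2$), so feasibility of that agent's consumption forces $\nfrac{p_c}{p_s}\le H$.

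For the third bullet (completeness), given a non-negative solution of $p_a=p_bp_c$ with $p_s$ to be set to $1$, $p_c\le H$, I would exhibit an explicit equilibrium: set $p_s=1$, set each Part~2 price as dictated ($p_1=p_c$, $p_2=1$, $p_4=1$, $p_5=p_b+1$, $p_7=p_b+1$, $p_6=p_a+p_c$), set the internal prices of each device per Tables~\ref{tab.conv}--\ref{tab.spl}, and define the consumption bundles for $A_1$, $A_2$ and each device agent to be the ``intended'' ones (one unit of the good each agent desires, scaled appropriately, e.g. $A_1$ consumes $p_c$ units of $G_1$, $A_2$ consumes $\nfrac{p_bp_c}{p_s}$-worth $=p_b p_c$ cost of $G_5$, i.e.\ $p_c$ units, etc.), exactly mirroring the witness construction in Lemmas~\ref{lem.eq} and~\ref{lem.lin}. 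Then check market clearing good by good and optimality agent by agent (each agent spends exactly her budget on a bundle on her Leontief ray, hence optimal). The bound $\nfrac{p_c}{p_s}=p_c\le H$ is what guarantees the internal agents of the devices have enough of their endowed goods to leave the prescribed net endowment, so this is where the hypothesis $p_c\le H$ is used.

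The main obstacle I anticipate is the degenerate case analysis when $p_b=0$ or various device prices vanish: unlike Lemma~\ref{lem.qd1}, here we must allow $p_b=0$, in which case $G_5,G_7$ have price $p_b+p_s=p_s>0$ (still positive, good), but $G_6$ has price $p_a+p_c$ which could be $0$ (if $p_c=0$, forcing $p_a=0$), and then $Conv_2$'s net consumption has zero worth. One must argue, exactly as in the last paragraph of the proof of Lemma~\ref{lem.qd1}, that any agent consuming a zero-priced good in one of these gadgets also demands, in equal proportion, a good of positive price (the exclusive good $G_r$ of the supporting $(EQ.)/(LIN.)$ submarkets, or a companion device good), so no zero-priced good can be over-demanded; hence total demand still equals total supply for every one of $G_1,\dots,G_7$ and closedness is preserved. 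Carefully enumerating which prices can be zero under $p_b=0$ (and only then) and checking this proportionality in each device is the delicate bookkeeping step; everything else is a mechanical re-run of the $p_s=1$ argument with $1$ replaced by $p_s$.
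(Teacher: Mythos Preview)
Your proposal is correct and follows essentially the same route as the paper. The paper's proof is a two-line appeal to Lemma~\ref{lem.qd1}: it notes that Table~\ref{tab.qd} is Table~\ref{tab.qd1} with $p_b$ replaced by $p_b+p_s$ (so the positivity hypothesis of Lemma~\ref{lem.qd1} is automatically met once $p_s>0$) and with the amount parameter $p_c$ replaced by $\nfrac{p_c}{p_s}$, checks that the Part~2 relations of the devices remain linear, reads off $p_6=(p_b+p_s)\nfrac{p_c}{p_s}$, and combines with $p_6=p_a+p_c$; for the third bullet it writes down the same explicit price assignment you do. Your device-by-device trace and your treatment of the degenerate case $p_b=0$ are simply an unpacking of that appeal. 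One small point: the paper's proof does not actually argue the clause $\nfrac{p_c}{p_s}\le H$; your observation that it is forced by non-negativity of the internal price $p_3=Hp_s-p_c$ in $Conv_1$ (equivalently, non-negativity of the internal agent $A_2$'s consumption $H-\nfrac{p_c}{p_s}$) is the right justification and fills that gap.
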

\begin{proof}
The only difference between the market of Tables \ref{tab.qd} and \ref{tab.qd1} are that $p_b$ is replaced with $p_b+p_s$,
and $p_c$ with $\nfrac{p_c}{p_s}$. As $p_s>0$ (Lemma \ref{lem.sms}) we have that $p_b+p_s>0$, and $\frac{p_c}{p_s}$ is
well-defined. Further, the prices to be set in Part 2 of devices are still linear, even when $l=\nfrac{p_c}{p_s}$ in 
$Comb$ and $Spl$. Thus using Lemma \ref{lem.qd1} it follows that the market is closed and $p_6=(p_b+p_s)\frac{p_c}{p_s}$,
Then using the last linear relation enforced in Part 2 of Table \ref{tab.qd} we get $p_a=\frac{p_bp_c}{p_s}$. 

For the last part, consider a non-negative $p_a$, $p_b$ and $p_c$ such that $p_a=p_bp_c$. Set $p_s=1$, the prices of goods
$G_1, G_2, G_4, G_5, G_7$ as per Part 2 of Table \ref{tab.qd}, and $p_3=p_b+2p_s$ and $p_6=(p_b+p_s)\frac{p_c}{p_s}$.
For goods within devices, set their prices as per Part 2 of Tables \ref{tab.conv}, \ref{tab.comb} and \ref{tab.spl}
respectively. For good $G_4$ in $Comb(\frac{p_c}{p_s})$ and good $G_4$ in $Spl(\frac{p_c}{p_s})$, set their prices to 
to $(p_b+2p_s)*\frac{p_c}{p_s}$. It is easy to verify that this gives an equilibrium for market $\CM_r$ of Table
\ref{tab.qd}.
\end{proof}

\subsection{Results}
\label{sec.res}

In this section, we prove the main results using the claims established in Section \ref{sec.red}.  Given a system $F$ of
multivariate polynomials as in (\ref{eq.F}), construct an equivalent set of relations $R'(F)$ consisting of only
three types of basic relations given in (\ref{eq.trans}). Let $K$ be the number of relations in $R'(F)$. For each relation
$r \in [K]$, depending on its type, we construct a market $\CM_r$ as described in Tables \ref{tab.eq}, \ref{tab.lin} and
\ref{tab.qd}. Further, for
each $r$ of type $(QD.)$ replace the corresponding devices with agents of Tables \ref{tab.conv}, \ref{tab.comb} and \ref{tab.spl}
respectively. Combine all the $\CM_r$'s to form one market $\CM$. 
Also add the agent of (\ref{eq.sms}) in $\CM$.

Since equilibrium prices of an Arrow-Debreu market are scale invariant, {\em i.e.,} if $\pp=(p_1,\dots,p_g)$ is an equilibrium price
vector then so is $\alpha\pp,\ \forall \alpha>0$, there is no loss of generality in assuming some kind of normalization of prices, for example,
$\sum_j p_j=1$, or choosing a good to be {\em num\`{e}raire}, {\em i.e.,} fix its price to $1$.\footnote{We note that in the
reduction in \cite{EY07} from a fixed-point to an Arrow-Debreu market with algebraic excess demand function, it is assumed that
$\sum_j p_j=1$ at equilibrium.} For the latter case, we require that the 
price of the num\`{e}raire good be non-zero at equilibrium. Any good, for which an agent is non-satiated\footnote{An
agent is said to be {\em non-satiated} for good $j$ if with respect to any given bundle, she can obtain more utility by consuming additional amount of good
$j$.}, qualifies to be a num\`{e}raire.

Given an equilibrium price vector $\pp$ of $\CM$, we know that $p_s>0$ due to Lemma \ref{lem.sms}. 
Henceforth by equilibrium prices $\pp$ of $\CM$ without loss of generality we mean equilibrium prices with $p_s=1$.
In the next two lemmas we establish that the equilibria of market $\CM$ exactly capture the solutions of system $F$.

\begin{lemma}\label{lem.MtoF}
If $\pp$ is an equilibrium price vector of $\CM$, then $z_j=p_j, \forall j \in[n]$ is a solution of $F$.
\end{lemma}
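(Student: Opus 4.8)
The plan is to show that an equilibrium price vector $\pp$ of the combined market $\CM$, when we use the normalization $p_s = 1$, must satisfy every relation of $R'(F)$, and then invoke Lemma~\ref{lem.RF} to conclude that $z_j = p_j$ for $j \in [n]$ is a solution of $F$. First I would recall that $\CM$ is assembled by taking, for each relation $r \in [K]$ of $R'(F)$, the closed submarket $\CM_r$ of Table~\ref{tab.eq}, \ref{tab.lin} or \ref{tab.qd} (with the $(QD.)$ devices expanded into the agents of Tables~\ref{tab.conv}--\ref{tab.spl}), plus the single extra agent $A_s$ of~(\ref{eq.sms}). By Lemma~\ref{lem.sms}, at any equilibrium of $\CM$ we have $p_s > 0$, so the normalization $p_s = 1$ is without loss of generality; this is exactly what is needed to pass from the scale-invariant relations~(\ref{eq.trans}) back to the original relations~(\ref{eq.basic}).

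The main step is the following observation: each $\CM_r$ has its own exclusive good $G_r$ (or, in the $(QD.)$ case, a whole private collection $G_1,\dots,G_7$ together with the exclusive goods of its internal $(EQ.)$/$(LIN.)$ submarkets and devices), so even though the ``shared'' goods $G_1,\dots,G_N,G_s$ may be touched by many $\CM_r$'s simultaneously, each $\CM_r$ is a \emph{closed} submarket of $\CM$ by Lemmas~\ref{lem.eq}, \ref{lem.lin} and~\ref{lem.qd}. The key subtlety here is that ``closed'' is a property asserted at every equilibrium of the \emph{entire} market $\CM$, not of $\CM_r$ in isolation, and the proofs of those lemmas are already phrased that way (they only use that the exclusive good is never consumed outside $\CM_r$ and that there are exactly the prescribed number of units of it in all of $\CM$). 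So I would argue: fix an equilibrium $\pp$ of $\CM$; for each $r$, apply the corresponding lemma to deduce that $\CM_r$ is locally at equilibrium and therefore enforces its relation, i.e.\ the prices appearing in relation $r$ satisfy that relation (using $p_s = 1$). Doing this for all $r \in [K]$ shows $\pp$ restricted to $(p_1,\dots,p_N,p_s)$ is a non-negative solution of $R'(F)$ with $p_s = 1$.

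One point that needs care is the $(QD.)$ relation under the possibility $p_b = 0$: Lemma~\ref{lem.qd1} was proved only under $p_b \neq 0$, but the market of Table~\ref{tab.qd} was designed precisely to handle $p_b = 0$ by replacing $p_b$ with $p_b + p_s$ (which is positive since $p_s > 0$) and $p_c$ with $p_c/p_s$; Lemma~\ref{lem.qd} states that $\CM_r$ is closed and enforces $p_a = p_b p_c / p_s$ unconditionally. I would invoke Lemma~\ref{lem.qd} directly here rather than re-deriving anything. I also need to note that the value bound $p_j \le H$ from Lemma~\ref{lem.RF} guarantees the hypothesis $p_c/p_s \le H$ that appears in the statement of Lemma~\ref{lem.qd} is consistent, though for the direction being proved here (equilibrium $\Rightarrow$ solution) this is automatic and not actually needed.

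Finally, having established that $(p_1,\dots,p_N)$ with $p_s = 1$ solves $R'(F)$, I apply Lemma~\ref{lem.RF}, whose ``iff'' statement says exactly that such a vector corresponds to a solution of $F$ via $z_j = p_j$ for $j \in [n]$, which is the claim. I expect the main obstacle to be stated cleanly rather than computationally hard: one must be careful that combining the $\CM_r$'s does not break the closedness of any individual one — i.e.\ that a shared good $G_j$ being demanded by agents of several submarkets does not create a surplus or deficit when all the submarkets are put together. This follows because each $\CM_r$ being closed means its own demand for $G_j$ equals its own supply of $G_j$ within $\CM_r$, so summing over all submarkets (and $A_s$) the total demand equals total supply for $G_j$; hence market clearing of the whole $\CM$ is consistent with, and in fact forces, each submarket to clear internally. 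I would make this summation argument explicit as the crux of the proof.
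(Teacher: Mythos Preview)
Your proposal is correct and follows essentially the same route as the paper: reduce via Lemma~\ref{lem.RF} to showing $\pp$ satisfies $R'(F)$, then invoke the first two bullets of Lemmas~\ref{lem.eq}, \ref{lem.lin}, \ref{lem.qd} for each relation $r$, using Lemma~\ref{lem.sms} to normalize $p_s=1$. The paper's proof is exactly this, stated in three lines.

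One remark: your final paragraph, the summation argument that ``market clearing of the whole $\CM$ forces each submarket to clear internally,'' is not needed in this direction and slightly muddles the logic. You are \emph{given} an equilibrium of $\CM$, and Lemmas~\ref{lem.eq}, \ref{lem.lin}, \ref{lem.qd} already assert closedness at any such equilibrium (their proofs, as you correctly observe earlier, use only the exclusive-good argument, which is unaffected by the presence of other submarkets). So once you have invoked those lemmas you are done; the summation-over-submarkets consistency check is really the content of the converse, Lemma~\ref{lem.FtoM}, where one must assemble local equilibria into a global one.
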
 
\begin{proof}
Due to Lemma \ref{lem.RF}, it is enough to show that $\pp$ is a solution of $R'(F)$. 
The submarket $\CM_r$, constructed for relation $r$ of $R'(F)$, is closed and enforces $r$ at $\pp$ (first two statements of Lemmas
\ref{lem.eq}, \ref{lem.lin}, and \ref{lem.qd}). Since $\CM$ is a union of $\CM_r$'s, $\pp$ has to satisfy each of the
relation of $R'(F)$. 
\end{proof}

Next we map solutions of $F$ to equilibria of market $\CM$.

\begin{lemma}\label{lem.FtoM}
If $\zz$ is a solution of $F$, then there exists equilibrium prices $\pp$ of market $\CM$, where $p_s=1$ and $p_j=z_j,\ \forall j
\in[n]$. 
\end{lemma}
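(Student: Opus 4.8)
The plan is to build the equilibrium of $\CM$ by assembling the per-relation equilibria guaranteed by the third bullets of Lemmas \ref{lem.eq}, \ref{lem.lin}, and \ref{lem.qd}, and then argue that these local equilibria glue together into a global one. First I would start from the given solution $\zz$ of $F$. By Lemma \ref{lem.RF}, setting $p_j = z_j$ for $j \in [n]$ and $p_s = 1$ extends (via the auxiliary variables introduced in the preprocessing of Sections \ref{sec.pre}--\ref{sec.scale}) to a non-negative solution $\pp$ of the reformulated system $R'(F)$, with the crucial bound $p_j \le H$ for all $j \in [N]$. So every variable of $R'(F)$ now has a concrete non-negative value, and in particular each relation $r \in [K]$ of $R'(F)$ is satisfied by these prices, with the $(QD.)$-type relations $p_a = p_b p_c$ satisfied with $p_s = 1$ and $p_c \le H$ — exactly the hypotheses needed for the third bullet of Lemma \ref{lem.qd}.

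Next I would, for each relation $r \in [K]$, invoke the relevant ``every non-negative solution gives an equilibrium of $\CM_r$'' statement: Lemma \ref{lem.eq} for $(EQ.)$, Lemma \ref{lem.lin} (or its Corollary) for $(LIN.)$, and Lemma \ref{lem.qd} for $(QD.)$. Each such statement supplies, at the prescribed prices, a consumption bundle for every agent in $\CM_r$ (including the agents internal to the Converter, Combiner and Splitter devices, whose bundles are spelled out in Tables \ref{tab.conv}, \ref{tab.comb}, \ref{tab.spl}) that is (i) optimal for that agent at the given prices and affordable from her endowment, and (ii) market-clearing \emph{within} $\CM_r$, i.e. total demand equals total supply for every good belonging to $\CM_r$. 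I would also add the agent $A_s$ of (\ref{eq.sms}) with $x_{ss} = W_{ss} = 1$, which is optimal and clears $G_s$ at $p_s = 1$ (Lemma \ref{lem.sms}). Because the exclusive good $G_r$ of each submarket lives in exactly one $\CM_r$, there is no conflict between the prices these lemmas ask us to assign to the exclusive goods; the only shared goods are the $G_j$ with $j \in [N]$ and $G_s$, and there all the submarkets are assigning the \emph{same} value $p_j$ (resp. $1$), consistently.

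The remaining point — and the step I expect to be the only genuinely delicate one — is verifying that the union of these local bundles is a \emph{global} equilibrium of $\CM$, i.e. that no good is over- or under-demanded when we sum across all submarkets. For each good $G_j$ ($j \in [N]$ or $j = s$) that is shared among several submarkets, one must check that the total endowment of $G_j$ across all of $\CM$ equals the total consumption of $G_j$ across all of $\CM$. This is where the \emph{closed submarket} design pays off: since each $\CM_r$ is closed (first bullet of Lemmas \ref{lem.eq}, \ref{lem.lin}, \ref{lem.qd}), its total demand for $G_j$ already equals its total supply of $G_j$ \emph{restricted to $\CM_r$}; summing these equalities over all $r \in [K]$, plus the singleton submarket $\{A_s\}$, gives global market clearing for $G_j$. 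Optimality of each agent's bundle is a local property and carries over verbatim. One should also handle the boundary case where some $p_j = 0$ (which can occur, since ultimately $p_j$ will be a probability): as noted in the proof of Lemma \ref{lem.qd1}, any agent consuming a zero-priced good in these gadgets also demands, in fixed proportion, a good of positive price, so her demand for the zero-priced good stays finite and the clearing computation is unaffected (and a zero-priced good in surplus is permitted by the equilibrium definition). Collecting these observations yields that $\pp$, together with the assembled bundles, is an equilibrium of $\CM$ with $p_s = 1$ and $p_j = z_j$ for all $j \in [n]$, as claimed.
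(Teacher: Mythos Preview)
Your proposal is correct and follows essentially the same approach as the paper's proof: use Lemma \ref{lem.RF} to extend $\zz$ to a solution of $R'(F)$ with $p_s=1$ and $p_j\le H$, invoke the third bullets of Lemmas \ref{lem.eq}, \ref{lem.lin}, and \ref{lem.qd} to obtain local equilibria in each $\CM_r$, add $A_s$ with $x_{ss}=1$, and glue. The paper's version is terser, simply noting that an equilibrium of $\CM$ is a union of equilibria of the $\CM_r$'s agreeing on shared prices; your more explicit summing argument for global market clearing is a welcome elaboration. One small point: in the gluing step you cite the \emph{first} bullet (closedness at equilibria of $\CM$), which would be circular since you have not yet shown $\pp$ is an equilibrium of $\CM$; the local clearing you need is already delivered by the \emph{third} bullet (the constructed equilibrium of $\CM_r$ has, by definition, supply equal to demand within $\CM_r$), which you in fact correctly state earlier as property (ii).
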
 
\begin{proof}
Using Lemma \ref{lem.RF}, we can construct a non-negative solution $\pp'$ of $R'(F)$ using $\zz$ such that $p'_s=1$, $p'_j=z_j,\
\forall j \in[n]$, and $p'_j \le H, \forall j \in [N]$. 

Construct prices $\pp$ of market $\CM$, where set $p_j=p'_j,\ \forall j \in[N]$ and $p_s=1$. 
Set $x_{ss}=1$ for agent $A_s$ of (\ref{eq.sms}). 
Last statement of Lemmas \ref{lem.eq}, \ref{lem.lin}, and \ref{lem.qd}, imply that in each $\CM_r$, $\pp$
can be extended to yield an equilibrium.  Since equilibrium in $\CM$ consists of equilibrium in each $\CM_r$ with same prices for
common goods, combining these gives an equilibrium of $\CM$.
\end{proof}

Thus establishing the strong relation between solutions of $F$ and equilibria of market $\CM$, next we prove the main theorem of the
paper which will give all the desired hardness results as corollaries.

\begin{theorem}\label{thm.main}
Equilibrium prices of market $\CM$, projected onto $(p_1,\dots,p_n)$, are in one-to-one correspondence with the solutions of
$F$. Furthermore $\CM$ can be described using polynomially many bits in $size[F]$, {\em i.e.,} $size[\CM]=poly(size[F])$.
\end{theorem}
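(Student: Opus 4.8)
The plan is to derive Theorem~\ref{thm.main} directly from Lemmas~\ref{lem.MtoF} and~\ref{lem.FtoM}, together with a routine size count, once the scale is fixed. First I would invoke Lemma~\ref{lem.sms}: at every equilibrium of $\CM$ we have $p_s>0$, and since the agent $A_s$ of~(\ref{eq.sms}) is non-satiated for $G_s$, that good is a legitimate num\`{e}raire; so by scale invariance I may and do assume $p_s=1$ at every equilibrium under consideration. Recall $\CM=\bigl(\bigcup_{r\in[K]}\CM_r\bigr)\cup\{A_s\}$, where $\CM_r$ is the $(EQ.)$, $(LIN.)$, or $(QD.)$ submarket of Tables~\ref{tab.eq},~\ref{tab.lin},~\ref{tab.qd}, with the devices expanded via Tables~\ref{tab.conv}--\ref{tab.spl}.

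For the correspondence, the forward direction is exactly Lemma~\ref{lem.MtoF}: by the first two bullets of Lemmas~\ref{lem.eq},~\ref{lem.lin},~\ref{lem.qd}, each $\CM_r$ is a closed submarket that enforces its relation $r$ of $R'(F)$ at \emph{any} equilibrium of $\CM$, so an equilibrium price vector $\pp$ satisfies all of $R'(F)$ with $p_s=1$, whence Lemma~\ref{lem.RF} gives that $(p_1,\dots,p_n)$ solves $F$. The reverse direction is Lemma~\ref{lem.FtoM}: given a solution $\zz$ of $F$, Lemma~\ref{lem.RF} produces a non-negative solution $\pp'$ of $R'(F)$ with $p'_s=1$, $p'_j=z_j$ for $j\le n$, and $p'_j\le H$ for all $j\le N$; the last bullets of Lemmas~\ref{lem.eq},~\ref{lem.lin},~\ref{lem.qd} (for the $(QD.)$ gadget using $p_s>0$ from Lemma~\ref{lem.sms} and $p_c\le H$ from Lemma~\ref{lem.RF}) then extend $\pp'$ to an equilibrium of each $\CM_r$ that agrees on shared goods, and these glue to an equilibrium of $\CM$. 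Hence the equilibrium price vectors of $\CM$ (with $p_s=1$), projected onto $(p_1,\dots,p_n)$, range over exactly the solutions of $F$, which is the claimed one-to-one correspondence; note that no unconditional existence is asserted --- $\CM$ has an equilibrium iff $F$ is solvable.

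The size bound $size[\CM]=poly(size[F])$ is bookkeeping on top of Lemma~\ref{lem.RF}. That lemma gives $size[R'(F)]=poly(size[F])$, hence $K=poly(size[F])$ relations and $poly(size[F])$-bit coefficients $B,C,D$ in the $(LIN.)$ relations; the only other constant in play is $H=M_{max}U_{max}^{d}+1$, for which $\log H=d\log U_{max}+O(\log M_{max})=poly(size[F])$. Each $\CM_r$ is of constant size --- $2$ agents and $3$ goods for $(EQ.)$, $2$ and $5$ for $(LIN.)$, and a constant number for $(QD.)$ once Conv, Comb, Spl and their constantly many auxiliary $(EQ.)/(LIN.)$ closed submarkets are expanded --- and every endowment and Leontief coefficient occurring in Tables~\ref{tab.eq}--\ref{tab.spl} lies in $\{0,1\}$, equals one of $B,C,D$, or is $H$ (or a bounded linear combination such as $Hq-p$ or $Hp_a+Hp_b-p_4$), all of $poly(size[F])$ bit-length. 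Summing over $r\in[K]$ and adding $A_s$ gives $O(K)=poly(size[F])$ goods and agents, each specified by $poly(size[F])$-bit numbers, so $size[\CM]=poly(size[F])$.

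The one step needing genuine care --- and the reason the gadget analyses were phrased as \emph{closed submarket} results --- is the gluing. Lemmas~\ref{lem.eq}--\ref{lem.qd} are proved for a single $\CM_r$ sitting inside an arbitrary ambient market, but their arguments rely on certain goods (each $G_r$, and for $(QD.)$ the goods $G_1,\dots,G_7$ together with all device-internal goods) being consumed only by agents of that very gadget, even though the priced goods $G_1,\dots,G_N,G_s$ are shared across many gadgets. One resolves this by renaming so that every good other than $G_1,\dots,G_N,G_s$ is globally exclusive to its gadget; then closedness of each $\CM_r$ is a purely local statement, the net supply and net demand of every good decompose additively over the gadgets, and equating them gadget-by-gadget equates them globally. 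With that observed, Theorem~\ref{thm.main} is immediate from the lemmas above.
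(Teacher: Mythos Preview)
Your proposal is correct and follows essentially the same approach as the paper: the correspondence is read off from Lemmas~\ref{lem.MtoF} and~\ref{lem.FtoM}, and the size bound is a constant-per-gadget count on top of $size[R'(F)]=poly(size[F])$ from Lemma~\ref{lem.RF}, with $size(H)=poly(size[F])$ the only nontrivial constant. Your final paragraph on gluing --- making every auxiliary good globally exclusive so that closedness of each $\CM_r$ is genuinely local --- is a point the paper leaves implicit in the construction but is worth stating, and your explicit invocation of the $p_c\le H$ bound when applying the third bullet of Lemma~\ref{lem.qd} is likewise more careful than the paper's own wording.
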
 
\begin{proof}
The first part follows using Lemmas \ref{lem.MtoF} and \ref{lem.FtoM}. For the second part, 
it is enough to show that $size[\CM]=poly(size[R'(F)])$, due to Lemma \ref{lem.RF}. Let $L$ be the size of $R'(F)$.

For each of the relation $r \in [K]$ of system $R'(F)$, we add $O(1)$ agents and goods in $\CM_r$, as described in Tables \ref{tab.eq},
\ref{tab.lin}, and \ref{tab.qd} together with Tables \ref{tab.conv}, \ref{tab.comb} and \ref{tab.spl}. 
Let $n_g$ and $n_a$ be the $\# goods$ and $\# agents$ in $\CM$, then we have $n_g=O(K)$ and $n_a=O(K)$.

Clearly, each agent of $\CM_r$ brings $O(1)$ goods and has utility for exactly those many goods, and these markets are closed
submarkets (see Definition \ref{def.csm}). The utility functions of the agents are Leontief which can be 
written as (good id, coefficient).  The endowments can be captured similarly. 

Thus the encoding of endowments and utility functions of agents in $\CM_r$ requires: (i) O($\log K$) if $r$ is of type $(EQ.)$, $(ii)$
O($\log K$ + $size(B,C,D)$) if $r$ is of type $(LIN.)$, and $(iii)$ O($\log K$+$L$) if $r$ is of type $(QD.)$ as
$size[H]=O(L)$, where $H$ is a constant defined in Section \ref{sec.scale}.

Since $\CM$ is a union of $\CM_r, \forall r \in[K]$, and the agent of (\ref{eq.sms}), the size of $\CM$ is at most $O(KL)$. 
\end{proof}

Theorem \ref{thm.main} shows that finding solutions of $F$ can be reduced to finding equilibria of an exchange market with Leontief
utility functions. 

As discussed in Section \ref{sec.nash}, the problem of computing a Nash equilibrium of a $3$-player game $\AA$ can be
formulated as finding a solution of system $F_{NE}(\AA)$ (\ref{eq.fne}) of multivariate polynomials in which variables are
bounded between $[0, 1]$ (Lemma \ref{lem.nash}). 
Note that $size[F_{NE}(\AA)]=O(size(\AA))$. Further, 
since taking projection on a set of coordinates 
is a linear function, the next theorem follows using the formulation of (\ref{eq.fne}), together with Lemma
\ref{lem.nash}, and Theorems \ref{thm.fixp} and \ref{thm.main} (see Section \ref{sec.fixp} for the reduction requirements
for class FIXP). 

\begin{theorem}\label{thm.ExLeon}
Computing an equilibrium for an Arrow-Debreu exchange market under Leontief utility functions is FIXP-hard. 
\end{theorem} 

Further, since the class of Leontief utility functions is a subclass of piecewise-linear concave (PLC) utility functions, the next corollary follows. 

\begin{corollary}\label{thm.ExPLC}
Computing an equilibrium for an Arrow-Debreu exchange market under piecewise-linear concave (PLC) utility functions is FIXP-hard. 
\end{corollary} 

Note that in Theorem \ref{thm.ExLeon} and Corollary \ref{thm.ExPLC} the resulting markets are guaranteed to have an equilibrium, since it was constructed from an instance of $3$-Nash, which always has a NE (Nash's theorem \cite{nash}). We next study the complexity of checking if an Arrow-Debreu market under Leontief (and PLC) utility admits an equilibrium. It turns out that the complexity of these questions is captured by the class $\ER$. 

Theorem \ref{thm.etr} shows that checking if a $3$-player game $\AA$ has NE within $0.5$-ball at origin in $l_\infty$ norm is
$\ER$-complete. Clearly, this problem can be reduced to finding a solution of $F_{NE}(\AA)$ of (\ref{eq.fne}) 
with upper-bound on $z_{ps}$'s changed from $1$ to $0.5$ (Lemma \ref{lem.nash}). 
If this system is reduced to a market $\CM$, then $\CM$ will have an equilibrium if and only if game $\AA$ has a NE within
$0.5$-ball at origin (Theorem \ref{thm.main}). Hence, we get the following result.

\begin{theorem}\label{thm.ExEtr}
Checking existence of an equilibrium for an Arrow-Debreu exchange market under Leontief utility functions, and under PLC utility functions, is $\ER$-hard. 
\end{theorem}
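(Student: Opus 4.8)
The plan is to reduce from \emph{Decision $3$-Nash} (Definition~\ref{def.d3n}), which is ETR-complete by Theorem~\ref{thm.etr}, using exactly the market construction already built in Section~\ref{sec.red} together with the one-to-one correspondence of Theorem~\ref{thm.main}. The only new ingredient is the observation, already noted after Theorem~\ref{thm.etr}, that imposing $\zz \le 0.5$ on Nash equilibria corresponds to replacing the bound $0 \le z_{ps} \le 1$ in $F_{NE}(\AA)$ (see~(\ref{eq.fne})) by $0 \le z_{ps} \le 0.5$; the rest of the system is unchanged, and the argument of Lemma~\ref{lem.nash} goes through verbatim, so the solutions of this modified system $F'_{NE}(\AA)$, projected onto $\zz$, are exactly the Nash equilibria of $\AA$ with $\zz \le 0.5$.

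First I would record that $F'_{NE}(\AA)$ is a system of the form~(\ref{eq.F}): all variables are non-negative with rational upper and lower bounds ($0$ and $1$, except $0$ and $0.5$ for the $z_{ps}$), all coefficients are rational, and $size[F'_{NE}(\AA)] = O(size(\AA))$. Then I would feed $F'_{NE}(\AA)$ into the construction of Section~\ref{sec.red} to obtain a Leontief exchange market $\CM$; by Theorem~\ref{thm.main} this takes time $poly(size[F'_{NE}(\AA)]) = poly(size(\AA))$, and the equilibrium prices of $\CM$, projected onto $(p_1,\dots,p_n)$, are in one-to-one correspondence with the solutions of $F'_{NE}(\AA)$.

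Combining these two facts: $\CM$ admits an equilibrium if and only if $F'_{NE}(\AA)$ has a (non-negative) solution, if and only if $\AA$ has a Nash equilibrium $\zz$ with $\zz \le 0.5$. Since $\AA \mapsto \CM$ is polynomial-time computable, this is a polynomial-time reduction from Decision $3$-Nash to the problem of checking whether a given Leontief exchange market has an equilibrium, which therefore is ETR-hard. Finally, because every Leontief utility function is in particular PLC (Section~\ref{sec.leon}), the very same market $\CM$ is a valid instance of the PLC exchange market problem with the identical equilibria, so checking existence of an equilibrium in a PLC exchange market is ETR-hard as well.

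The main point requiring care — and the step I expect to be the real obstacle — is the equivalence ``$\CM$ has an equilibrium $\iff$ $F'_{NE}(\AA)$ is feasible,'' i.e.\ ensuring the reduction preserves \emph{both} directions, including the infeasible case. The ``if'' direction is Lemma~\ref{lem.FtoM}; the ``only if'' direction needs that $\CM$ has \emph{no} spurious equilibrium when $F'_{NE}(\AA)$ is infeasible, which is precisely the one-to-one (not merely onto) part of Theorem~\ref{thm.main}, itself resting on the closed-submarket analysis (Lemmas~\ref{lem.eq},~\ref{lem.lin},~\ref{lem.qd}) and on $p_s>0$ at every equilibrium (Lemma~\ref{lem.sms}). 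One should also double-check that $\CM$ is well-defined as a market purely syntactically from $F'_{NE}(\AA)$, independent of feasibility, which it is since the construction only reads off coefficients and bounds. Note that only ETR-\emph{hardness} is claimed here, so no membership argument is needed at this point.
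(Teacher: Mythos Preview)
Your proposal is correct and follows essentially the same argument as the paper: reduce from Decision $3$-Nash via the modified system $F'_{NE}(\AA)$ with the $z_{ps}$-bounds tightened to $0.5$, apply the market construction, and invoke the one-to-one correspondence of Theorem~\ref{thm.main} to conclude that $\CM$ has an equilibrium iff $\AA$ has a NE with $\zz\le 0.5$; PLC hardness then follows since Leontief is a subclass. Your added remarks on the ``only if'' direction and on $\CM$ being well-defined regardless of feasibility are exactly the points the paper leaves implicit.
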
 

\cite{GargVaz} gave a reduction from an exchange market $\CM$ with arbitrary concave utility functions 
to an equivalent Arrow-Debreu market $\CM'$ with firms, where utility functions of all the agents are linear.
It turns out that $\CM'$ has all the goods of $\CM$, in addition to others, and equilibrium prices of $\CM$ are in
one-to-one correspondence with the equilibrium prices of $\CM'$ projected onto the prices of common
goods. Further the production functions of $\CM'$ are precisely the utility functions in $\CM$, hence representation of
$\CM'$ is in the order of the representation of $\CM$. Therefore, this reduction together with Theorems \ref{thm.ExLeon}
and \ref{thm.ExEtr}, gives the next two results.

\begin{corollary}\label{cor.ADFixp}
The problem of computing an equilibrium for an Arrow-Debreu market under linear utility functions and Leontief production, and in turn PLC (polyhedral) production sets, is FIXP-hard. 
\end{corollary} 

\begin{corollary}\label{cor.ADEtr}
The problem of checking the existence of an equilibrium for an Arrow-Debreu market under linear utility functions and Leontief production, and in turn PLC
(polyhedral) production sets, is $\ER$-hard.  
\end{corollary}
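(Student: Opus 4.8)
The plan is to compose two reductions that are already available: the ETR-hardness of Leontief exchange markets (Theorem~\ref{thm.ExEtr}) and the production-introducing reduction of \cite{GargVaz} that turns an exchange market with concave utilities into an Arrow-Debreu market whose agents all have linear utility functions. First I would fix a $3$-player game $\AA$ and recall that, via the reduction behind Theorem~\ref{thm.ExEtr} (i.e. Theorem~\ref{thm.main} applied to the system $F_{NE}(\AA)$ of~(\ref{eq.fne}) with the upper bound on each $z_{ps}$ tightened to $0.5$, together with Lemma~\ref{lem.nash}), one can build in time $poly(size(\AA))$ a Leontief exchange market $\CM$ that admits an equilibrium if and only if $\AA$ has a Nash equilibrium lying in the $0.5$-ball at the origin in $l_{\infty}$-norm. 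Next I would feed $\CM$ into the \cite{GargVaz} transformation to obtain an Arrow-Debreu market $\CM'$ in which every agent has a linear utility function and every firm's production set is literally a copy of one of the Leontief utility functions of $\CM$; consequently each such production set is a Leontief production set in the sense of Section~\ref{sec.proddef}, with production coefficients $D_{fj}=A_{ij}$. Since $size[\CM']=O(size[\CM])$ and $size[\CM]=poly(size(\AA))$, this composition is a polynomial-time reduction.

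The second step is to transfer equilibrium existence across the \cite{GargVaz} reduction. That reduction keeps all goods of $\CM$ inside $\CM'$ and its equilibrium price vectors are in one-to-one correspondence with those of $\CM$, projected onto the common goods; in particular $\CM'$ has an equilibrium exactly when $\CM$ does. Chaining this with the equivalence of the previous paragraph, $\CM'$ admits an equilibrium if and only if $\AA$ has a Nash equilibrium in the $0.5$-ball at the origin in $l_{\infty}$-norm. Since Decision $3$-Nash is ETR-complete (Theorem~\ref{thm.etr}), we conclude that deciding whether an Arrow-Debreu market with linear utilities and Leontief production sets admits an equilibrium is ETR-hard. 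The claim for PLC (polyhedral) production sets then follows immediately, because every Leontief production set is a special case of the polyhedral production sets of Section~\ref{sec.proddef}, so the very same instances $\CM'$ witness ETR-hardness in the more general model.

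I expect the only genuinely delicate point to be checking that the \cite{GargVaz} reduction, when applied to Leontief source utilities, yields production sets that are \emph{Leontief} production sets rather than merely general polyhedral ones, and that it neither introduces spurious equilibria nor eliminates existing ones; both follow from the stated one-to-one correspondence of equilibrium price vectors and from the fact that the Leontief utility template $U_i(\xx_i)=\min_j\{x_{ij}/A_{ij}\}$ matches the Leontief production template $x^s_{fa}=\min_{j\ne a}\{x^r_{fj}/D_{fj}\}$ exactly. Everything else is routine bookkeeping about the polynomial size of the composed reduction.
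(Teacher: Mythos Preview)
Your proposal is correct and follows essentially the same argument as the paper: apply the \cite{GargVaz} reduction to the Leontief exchange market produced by Theorem~\ref{thm.ExEtr}, use the one-to-one correspondence of equilibria to transfer existence, and note that Leontief utilities become Leontief production sets (with PLC following as a special case). The only cosmetic difference is that you unroll the chain all the way back to Decision $3$-Nash, whereas the paper simply cites Theorem~\ref{thm.ExEtr} as the starting point.
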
 

{\bf Remark:}  Corresponding to each FIXP-hardness result stated above, there is an artificial way of defining a sufficiency condition with respect to which each of these problems is in fact FIXP-complete. The sufficiency condition states that instance $I$ of the market satisfies the condition if and only if there is a 3-Nash instance $I'$ such that the reduction given above from 3-Nash to the market transforms $I'$ to $I$. Since all such instance $I$ of the market equilibrium problem admit equilibria, this is a valid sufficiency condition. Furthermore, there is a polynomial time algorithm for checking if $I$ satisfies this sufficiency condition, since  the reduction can be inverted in P, i.e., apply the inverse of the reduction to market instance $I$ to obtain $I'$; if $I'$ is a valid 3-Nash instance, then $I$ satisfies the sufficiency condition. However, in economics, sufficiency conditions are defined from elementary considerations, and this is not acceptable from that viewpoint.

\section{Checking Existence of Equilibrium lies in $\ER$}
\label{sec.eetr}

Using the nonlinear complementarity problem (NCP) formulation of~\cite{GargMV16} to capture equilibria of PLC markets, in this section we show that checking for existence of equilibrium in PLC markets is in $\ER$, and therefore $\ER$-complete using Corollary \ref{cor.ADEtr}. For the sake of completeness next we present the NCP formulation derived in \cite{GargMV16}.

Recall the PLC utility functions and PLC production sets defined in Sections \ref{sec.util} and \ref{sec.proddef} respectively.  Using the optimal bundle and optimal production plan conditions at equilibrium for such a market, \cite{GargMV16} derived the nonlinear complementarity problem (NCP) formulation AD-NCP for market equilibrium as shown in Table \ref{AD-NCP}, and showed the Lemma \ref{lem.ad-ncp}. 
Let $\perp$ denote a complementarity constraint between the inequality and the variable (e.g., $\sum_j D^k_{fj}x^{s}_{fj} \le \sum_j C^k_{fj}x^{r}_{fj} + T^k_f \perp \delta^k_f \ge 0$ is a shorthand for $\sum_j D^k_{fj}x^{s}_{fj} \le \sum_j C^k_{fj}x^{r}_{fj} + T^k_f;\ \ \delta^k_f\ge 0; \ \ \delta^k_f(\sum_j D^k_{fj}x^{s}_{fj} - \sum_j C^k_{fj}x^{r}_{fj} - T^k_f)=0$).

\begin{table}[!h]
\caption{AD-NCP}\label{AD-NCP}
\vskip -0.5cm
\begin{eqnarray*}
\forall (f,k):\ \sum_j D^k_{fj}x^{s}_{fj} \le \sum_j C^k_{fj}x^{r}_{fj} + T^k_f &\ \ \ \perp \ \ \ & \delta^k_f\ge 0\\
\forall (f,j):\ p_j \le \sum_k D^k_{fj}\delta^k_f & \ \ \ \perp \ \ \ & x^{s}_{fj}\ge 0 \\ 
\forall (f,j):\ \sum_k C^k_{fj}\delta^k_f \le p_j & \ \ \ \perp \ \ \ & x^{r}_{fj}\ge 0 \\
\forall (i,j):\ \sum_k U^k_{ij}\gamma^k_{i} \le \lambda_i p_j & \ \ \ \perp \ \ \ & x_{ij}\ge 0 \\
\forall (i,k):\ u_i \le \sum_j U^k_{ij}x_{ij} + T^k_{i} &\ \ \ \perp \ \ \ & \gamma^k_{i}\ge 0 \\
\forall i:\ \sum_j x_{ij}p_j \le \sum_j W_{ij}p_j + \sum_f \Theta_{if}\phi_f & \ \ \ \perp \ \ \  &\lambda_i\ge 0 \\
\forall j:\ \sum_i x_{ij} + \sum_f x^{r}_{fj} \le 1 + \sum_f x^{s}_{fj} & \ \ \ \perp \ \ \ & p_j\ge 0 \\
\forall i:\ \sum_k \gamma^k_{i} = 1 & ; & \forall f:\ \phi_f = \sum_k\delta^k_fT^k_f \\ 
u_i = \li(\sum_j W_{ij} p_j + \sum_f \Theta_{if}\phi_f) +\sum_k \gamma^k_i T^k_i & ; & \sum_j p_j = 1  
\end{eqnarray*}
\end{table}

\begin{lemma}\label{lem.ad-ncp}\cite{GargMV16}
If $(\pp,\xx, \xx^{s}, \xx^{r},\plambda,\pgamma, \pdelta)$ is a solution of AD-NCP of Table~\ref{AD-NCP}, then  $(\pp,\xx,\xx^{s}, \xx^{r})$ is a market equilibrium. Further, if $(\pp,\xx,\xx^{s}, \xx^{r})$ is a market equilibrium, then $\exists (\plambda, \pgamma,\pdelta)$ such that $(\pp, \xx, \xx^{s}, \xx^{r}, $ $ \plambda, \pgamma, \pdelta)$ is a solution of AD-NCP. 
\end{lemma}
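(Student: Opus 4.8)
The plan is to read the formulation AD-NCP of \cite{GMV} as nothing more than the juxtaposition of the Karush--Kuhn--Tucker and strong-duality conditions of three families of linear programs together with the market-clearing inequalities: (i) for each firm $f$, the profit-maximization LP $\max\{\sum_j p_j(x^s_{fj}-x^r_{fj})\}$ over the polyhedral production set $\CP_f$ of Section~\ref{sec.proddef}, with $\delta^k_f$ the multiplier of its $k$-th constraint; (ii) for each agent $i$, her utility-maximization program rewritten in epigraph form --- introduce a variable $u_i$, maximize $u_i$ subject to $u_i\le\sum_j U^k_{ij}x_{ij}+T^k_i$ for every $k$, the budget $\sum_j p_jx_{ij}\le\sum_j W_{ij}p_j+\sum_f\Theta_{if}\phi_f$, and $\xx_i\ge\zero$ --- which is a genuine LP because $U_i$ is a minimum of affine functions, with $\gamma^k_i$ the multiplier of the $k$-th epigraph inequality and $\lambda_i$ that of the budget; and (iii) the feasibility condition $\sum_i x_{ij}+\sum_f x^r_{fj}\le 1+\sum_f x^s_{fj}$ with complementary slackness $p_j(\cdots)=0$ permitting surplus only of free goods. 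With this dictionary in hand, each direction becomes an application of the fact that for an LP the KKT conditions are necessary \emph{and} sufficient for optimality, plus LP strong duality.

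For the direction ``AD-NCP solution $\Rightarrow$ equilibrium'', I would take a solution $(\pp,\xx,\xx^s,\xx^r,\plambda,\pgamma,\pdelta)$ and observe: the first three rows of Table~\ref{AD-NCP} are exactly primal feasibility of $(\xx^s_f,\xx^r_f)$ in $\CP_f$, dual feasibility of $\pdelta_f$, and complementary slackness for firm $f$'s LP, so $(\xx^s_f,\xx^r_f)$ is profit-maximizing at $\pp$, and $\phi_f=\sum_k\delta^k_fT^k_f$ correctly records that profit (equal to $\sum_j p_j(x^s_{fj}-x^r_{fj})$ by complementary slackness). The rows involving $\gamma^k_i,\lambda_i,u_i$ are the KKT conditions of agent $i$'s epigraph LP: $\sum_k\gamma^k_i=1$ is stationarity in $u_i$; combined with $\gamma^k_i(u_i-\sum_j U^k_{ij}x_{ij}-T^k_i)=0$ it forces $u_i=\min_k\{\sum_j U^k_{ij}x_{ij}+T^k_i\}=U_i(\xx_i)$; and $\sum_k U^k_{ij}\gamma^k_i\le\lambda_ip_j$ with $x_{ij}(\sum_k U^k_{ij}\gamma^k_i-\lambda_ip_j)=0$ is stationarity in $x_{ij}$; hence $\xx_i$ is an optimal bundle. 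The last row restates market clearing verbatim. So $(\pp,\xx,\xx^s,\xx^r)$ is an equilibrium.

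For the converse, given a market equilibrium $(\pp,\xx,\xx^s,\xx^r)$ I would, for each firm $f$, use that its profit LP attains its optimum at $(\xx^s_f,\xx^r_f)$ by the definition of equilibrium, so LP strong duality supplies an optimal dual $\pdelta_f\ge\zero$; strong duality and complementary slackness then yield all firm rows and $\phi_f=\sum_k\delta^k_fT^k_f$. For each agent $i$, set $u_i=U_i(\xx_i)$ and take an optimal dual $(\pgamma_i,\lambda_i)\ge\zero$ of her epigraph LP (again attained at equilibrium); dual feasibility and complementary slackness give the agent rows, and with $m_i=\sum_j W_{ij}p_j+\sum_f\Theta_{if}\phi_f$ the chain $u_i=\sum_k\gamma^k_iu_i=\sum_j(\sum_k\gamma^k_iU^k_{ij})x_{ij}+\sum_k\gamma^k_iT^k_i=\lambda_i\sum_j p_jx_{ij}+\sum_k\gamma^k_iT^k_i=\lambda_i m_i+\sum_k\gamma^k_iT^k_i$ --- the last step by $\lambda_i(\sum_j p_jx_{ij}-m_i)=0$ --- gives the final agent identity. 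Market clearing holds by definition, and $\sum_j p_j=1$ after the usual price normalization. Collecting $(\plambda,\pgamma,\pdelta)$ completes the AD-NCP solution.

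The step I expect to be the main obstacle is not any single idea but the bookkeeping in the degenerate corners, where the complementarity pattern of Table~\ref{AD-NCP} must close up \emph{exactly} as written. When $\lambda_i=0$ (a satiated agent whose budget is slack), the row $\sum_k U^k_{ij}\gamma^k_i\le\lambda_ip_j$ collapses to $\sum_k U^k_{ij}\gamma^k_i\le 0$ and one must exhibit a dual $\pgamma_i$ of the epigraph LP compatible with this, using that some $T^k_i=0$ (cf.\ Section~\ref{sec.util}); when $p_j=0$ one must keep good $j$'s surplus confined to the clearing row without violating $x_{ij}(\sum_kU^k_{ij}\gamma^k_i-\lambda_ip_j)=0$ or $x^r_{fj}(\sum_kC^k_{fj}\delta^k_f-p_j)=0$; and one must confirm throughout that the epigraph reformulation is a bona fide LP and that its primal and dual are both solvable at equilibrium (feasible since $\xx_i=\zero$ works, finite optimum since the equilibrium furnishes an optimal bundle). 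These checks are routine but need care to align with the precise form of the table.
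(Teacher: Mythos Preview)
The paper does not actually prove this lemma: it is quoted verbatim from \cite{GMV} and used as a black box, with no argument given here. So there is no ``paper's own proof'' to compare against.

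That said, your proposal is the right proof and almost certainly the one in \cite{GMV}. Reading the rows of Table~\ref{AD-NCP} as the KKT systems of the firms' profit-maximization LPs (rows 1--3, with $\phi_f=\sum_k\delta^k_fT^k_f$ the dual optimum by strong duality) and of each agent's epigraph LP (rows 4--6, with $\sum_k\gamma^k_i=1$ the stationarity in $u_i$ and the identity $u_i=\lambda_i m_i+\sum_k\gamma^k_iT^k_i$ the strong-duality equality), together with market clearing in row 7, is exactly how such NCP characterizations are derived. Both directions then reduce to ``KKT is necessary and sufficient for LP optimality,'' which is what you do.

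Your caveats about degenerate corners ($\lambda_i=0$, zero prices, attainment of the agent LP) are the right places to be careful, but none of them is a genuine obstacle: the agent LP is feasible ($\xx_i=\zero$) and bounded (by the budget and the PLC form), so an optimal primal--dual pair exists; and the firm LP is bounded at equilibrium prices because an equilibrium furnishes a finite optimizer. The bookkeeping you outline closes in each case.
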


Due to Lemma \ref{lem.ad-ncp}, checking
if the market has an equilibrium is equivalent to checking if AD-NCP admits a solution. 
Since all the inequalities and equalities in AD-NCP are polynomial, and all the coefficients in these polynomials are rational
numbers, AD-NCP can be represented using signature $\{0, 1, -1, +,*, <,\le, =\}$. 
The denominators of the coefficients can be
removed by taking least common multiple (LCM) while keeping the size of coefficients polynomial in the original size.
Therefore, checking if AD-NCP has a solution can be formulated in $\ER$ (see Section \ref{sec.etr} for definition), and we get the
following result using Theorem \ref{thm.ExEtr}.

\begin{theorem}\label{thm.ExEtrComp}
Checking existence of an equilibrium in an exchange market with piecewise linear concave utility functions is $\ER$-complete. 
\end{theorem}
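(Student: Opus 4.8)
The plan is to prove ETR-completeness by combining the ETR-hardness that is already in hand with a proof that the problem lies in ETR; essentially all the remaining work sits in the membership direction.

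For hardness, I would simply invoke Theorem~\ref{thm.ExEtr}, which already asserts that deciding whether an exchange market with PLC (indeed, already Leontief) utilities admits an equilibrium is ETR-hard. That theorem was obtained by feeding the polynomial system $F_{NE}(\AA)$ with the tightened bounds $z_{ps}\le 1/2$ — whose solvability is ETR-hard by Theorem~\ref{thm.etr} — into reduction $\CR$, i.e.\ Theorem~\ref{thm.main}, so no fresh argument is needed here.

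For membership, I would start from the NCP characterization of equilibria in Table~\ref{AD-NCP} and Lemma~\ref{lem.ad-ncp}, \emph{specialized to the exchange-market case} by taking the firm set $\CF=\emptyset$: the variables $\xx^s,\xx^r,\pdelta$ and the profit terms $\phi_f$ disappear, together with all the production-side complementarity rows, leaving a system $\textrm{AD-NCP}'$ in the variables $\pp,\xx,\plambda,\pgamma$ and the $u_i$'s (all non-negative), with the remaining rows $\sum_k U^k_{ij}\gamma^k_i \le \lambda_i p_j$, $x_{ij}(\sum_k U^k_{ij}\gamma^k_i - \lambda_i p_j)=0$, the utility rows, the budget rows, market clearing $\sum_i x_{ij}\le 1$ with its complementarity, $\sum_k \gamma^k_i = 1$, $u_i=\lambda_i\sum_j W_{ij}p_j+\sum_k\gamma^k_iT^k_i$, and $\sum_j p_j=1$. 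By (the exchange-market instance of) Lemma~\ref{lem.ad-ncp}, the market has an equilibrium iff $\textrm{AD-NCP}'$ has a solution. Every relation in $\textrm{AD-NCP}'$ is a polynomial equality or inequality with rational coefficients — the complementarity conditions being polynomial equations of degree at most three — so multiplying each relation through by the least common multiple of the denominators of its coefficients clears all fractions while keeping the bit-size polynomial in $size(\CM)$. The transformed system uses only the signature $\{0,1,-1,+,*,<,\le,=\}$; prefixing an existential quantifier over the whole vector $(\pp,\xx,\plambda,\pgamma,\{u_i\})$ and taking the conjunction of all the transformed relations with the non-negativity constraints $v\ge 0$ yields an ETR sentence of size $poly(size(\CM))$ that holds exactly when $\CM$ has an equilibrium. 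This puts the decision problem in ETR, and together with Theorem~\ref{thm.ExEtr} it is ETR-complete.

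The point requiring care — more a bookkeeping matter than a genuine obstacle — is the specialization of Lemma~\ref{lem.ad-ncp} to pure exchange: an exchange market is literally the model of Section~\ref{sec.ad} with $\CF=\emptyset$, and the normalization $\sum_j p_j = 1$ built into the NCP is harmless because equilibrium prices are scale-invariant (Section~\ref{sec.res}), so the ``iff'' of the lemma transfers verbatim. The only other thing to watch is that the number of variables, the number of predicates, and the coefficient sizes after clearing denominators all remain polynomial in $size(\CM)$; the cubic degree introduced by the complementarity products is irrelevant, since ETR places no degree bound on the quantifier-free formula $\phi$.
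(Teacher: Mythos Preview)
Your proposal is correct and follows essentially the same route as the paper: hardness is quoted from Theorem~\ref{thm.ExEtr}, and membership is obtained by observing that the AD-NCP of Table~\ref{AD-NCP} and Lemma~\ref{lem.ad-ncp} consists of polynomial (in)equalities with rational coefficients, clearing denominators by an LCM, and writing the resulting system as an ETR sentence. The only cosmetic difference is that you explicitly specialize the NCP to $\CF=\emptyset$, whereas the paper simply works with the general AD-NCP (of which the exchange case is the instance with no firms); this changes nothing in the argument.
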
 

The next result follows using Corollary \ref{cor.ADEtr}.

\begin{theorem}\label{thm.ADEtrComp}
Checking existence of an equilibrium in an Arrow-Debreu market with PLC utility functions and PLC (polyhedral) production sets
is $\ER$-complete.  
\end{theorem}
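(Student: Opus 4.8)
The plan is to prove the two directions of completeness separately, each of which reduces to a fact already available in the excerpt.

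For ETR-hardness, I would simply invoke Corollary \ref{cor.ADEtr}, which asserts that checking existence of an equilibrium in an Arrow-Debreu market with linear utility functions and PLC (polyhedral) production sets is ETR-hard. Since linear utility functions form a subclass of PLC utility functions, every such instance is, verbatim, an instance of the more general problem in the statement, and it admits an equilibrium in the general problem iff it does so in the restricted one. Hence ETR-hardness transfers with no new construction. (The polynomial-time reduction functions $f$ and $g$ are the identity on the market description and on prices, respectively, trivially meeting the reduction requirements.)

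For membership in ETR, I would use the nonlinear complementarity formulation AD-NCP of \cite{GMV} together with Lemma \ref{lem.ad-ncp}: the given market admits an equilibrium iff AD-NCP has a solution in the (nonnegative) variables $\pp,\xx,\xx^{s},\xx^{r},\plambda,\pgamma,\pdelta$ together with the auxiliary $u_i$'s and $\phi_f$'s. Every constraint of Table \ref{AD-NCP} is a polynomial (in)equality with rational coefficients in these variables, and the defining nonlinear terms (products such as $\delta^k_f(\cdots)$, $x_{ij}(\cdots)$, $\li(\cdots)$, etc.) are all expressible over the signature $\{0,1,-1,+,*,<,\le,=\}$. Clearing denominators by multiplying each constraint through by the LCM of the denominators of its coefficients keeps the bit-size of the coefficients polynomial in the input size; the nonnegativity conditions $p_j\ge 0$, $x_{ij}\ge 0$, etc., are themselves predicates over the same signature. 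Conjoining all these predicates and existentially quantifying over all the variables yields an ETR sentence that is true iff the market has an equilibrium, so the problem lies in ETR. This argument parallels the one behind Theorem \ref{thm.ExEtrComp}; the only difference is that the production variables $\xx^{s},\xx^{r},\pdelta,\phi$ are now genuinely present rather than vacuous.

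Combining the two directions gives ETR-completeness. I do not anticipate a serious obstacle: the hard direction is handed to us by Corollary \ref{cor.ADEtr}, and the easy direction is a routine syntactic encoding of AD-NCP. The one point warranting a little care is confirming that the size of the ETR instance — number of variables plus $size(\phi)$ — remains polynomial in the size of the market description after LCM-clearing; this follows because AD-NCP has polynomially many constraints, each of polynomial size, and denominator-clearing inflates coefficient sizes by at most a polynomial factor.
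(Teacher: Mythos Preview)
Your proposal is correct and matches the paper's approach: hardness is inherited from Corollary~\ref{cor.ADEtr} (linear utilities being a subclass of PLC), and membership in ETR comes from encoding the AD-NCP formulation of Table~\ref{AD-NCP}, which already includes the production variables, as an existential polynomial sentence. The paper's proof is simply the one-line remark ``follows using Corollary~\ref{cor.ADEtr},'' relying on the ETR-membership argument established just before Theorem~\ref{thm.ExEtrComp}, which is precisely what you spell out.
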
 

\section{Leontief Utilities under Constant Number of Agents}
\label{sec.const}

In this section, we show that there is a polynomial time algorithm for finding an equilibrium in Arrow-Debreu exchange markets under Leontief utility functions provided the number of agents is a constant. This settles part of an open problem of \cite{DK08}. Consider a Leontief exchange market with $n$ goods and $d$ agents, where $d$ is a constant. The Leontief utility function of agent $i$ is given by
\[U_i(\x_i) = \min_{j\in[n]} \left\{\frac{x_{ij}}{A_{ij}}\right\},\]
where $A_{ij}\ge 0$ is the fraction of good $j$ that agent $i$ wants. Let $W_{ij}$ be the amount of good $j$ agent $i$ owns. We assume wlog that $\sum_i W_{ij}=1,\ \forall j$. Let us capture the equilibrium utility of agent $i$ in variable $\beta_i$, then the optimal bundle condition gives,
\begin{equation}\label{eq.cob}
x_{ij} = A_{ij} \beta_i 
\end{equation}
at an equilibrium. Further, if $(p_1,\dots,p_n)$ are corresponding equilibrium prices, then the market clearing conditions can be
written as, 
\begin{eqnarray}\label{eq.cmc}
\forall j\in[n], & \ \ \sum_i x_{ij} = \sum_i A_{ij}\beta_i \le 1 \ \ \ \text{ and }  \ \ \ \text{if } p_j >0 \text{ then } \sum_i A_{ij}\beta_i = 1\enspace .
\end{eqnarray}

Further, since Leontief utility function is non-satiated (given any bundle, there exists another bundle where utility increases), the agents will spend all of their earned money. 
This gives the following relation in $\bbeta$ and $\p$:
\[\sum_j p_j x_{ij} = \sum_j W_{ij} p_j \ \ \Rightarrow \ \ \beta_i = \frac{\sum_j W_{ij}p_j}{\sum_j A_{ij}p_j}, \ \forall i\in[d]\enspace .\]

 First, we show that if there is an equilibrium, then there is one where at most $d$ prices are non-zero. 

\begin{lemma}\label{lem:cnzp}
If an exchange market with Leontief utilities has an equilibrium, then there is one where at most $d$ goods have non-zero prices. 
\end{lemma}

\begin{proof}
Let $(\bbeta^*, \p^*)$ be an equilibrium. Suppose $k>d$ number of goods have non-zero prices. Wlog, we assume that the first $k$ goods have non-zero prices. Consider the following linear system in variables $\q$:
\begin{eqnarray*}
\beta^*_i \sum_{j} A_{ij}q_j  = \sum_j W_{ij}q_j, & \ \ \forall i\in [d]\\
q_j  \ge  0, & \ \ \forall j \in [k] \\
q_j = 0, &\ \ \forall j > k\enspace .  
\end{eqnarray*}

The equilibrium allocation corresponding to $(\bbeta^*, \p^*)$ is $x^*_{ij} = A_{ij} \beta^*_i$.  Note that any $\q$ satisfying the above system will give an equilibrium prices together with allocation $\xx^*$.  Further, this system is non-empty because $\p^*$ is a feasible solution. At a vertex of the above system, at least $n$ inequalities are tight. Out of these first gives at most $d$, and therefore at least $n-d$ has to be of second type implying so many $q_j$'s have to be zero.  Hence, at such a point at most $d$ $q_j$'s will be non-zero. Since it is an equilibrium, the lemma follows.
\end{proof}

Due to Lemma \ref{lem:cnzp}, to find an equilibrium, it suffices to check for every set $S$ of $d$ goods if there is an equilibrium by setting the prices of goods outside $S$ to zero. And, this can be achieved by checking the feasibility of the following system, where $\beta_i$'s and $p_j$'s are variables, 
\begin{equation}\label{eq.cfs}
\begin{aligned}
\forall j\notin S,\ p_j=0; & \ \ \ \ \forall j \in S,\ p_j\ge 0  \\
\forall j\notin S,\ \sum_i A_{ij} \beta_i \le 1; & \ \ \ \ \forall j \in S,\ \sum_i A_{ij} \beta_i = 1\\
\forall i,\ \beta_i = \frac{\sum_j W_{ij}p_j}{\sum_j A_{ij}p_j} & \enspace . 
\end{aligned}
\end{equation}

\begin{lemma}\label{lem:cfs}
If $\exists \bbeta^*, \ \pp^*$ satisfying (\ref{eq.cfs}) then they constitute an equilibrium. 
\end{lemma}
\begin{proof}
Let $x^*_{ij}=\beta^*_i A_{ij}$, then for every agent $i$ we get 
\[
\sum_{j} x^*_{ij} p^*_{j}= \beta^*_i \sum_j A_{ij} p^*_j =\sum_j W_{ij} p^*_j \ \ (\mbox{using third condition of (\ref{eq.cfs})})\enspace . 
\]

This together with the fact that $x^*_{ij}=\beta^*_i A_{ij}$ it follows that $\xx^*_i$ is an optimal bundle of agent $i$ at prices
$\pp^*$. Market clearing for goods follows from the first two conditions of (\ref{eq.cfs}).
\end{proof}

Note that system (\ref{eq.cfs}) remains unchanged if we remove price variables that are set to zero. Then it will have $2d$ variables,
The first two conditions are linear in these variables, while the third condition is of degree two. 
Since $d$ is a constant, checking non-emptiness of (\ref{eq.cfs}) can be done in polynomial time \cite{bpr,bpr-book,DK08}. 

If (\ref{eq.cfs}) turns out to be non-empty then by Lemma \ref{lem:cfs} we get an equilibrium.  Lemma \ref{lem:cnzp} implies that we need to check feasibility of this system for every subset of goods of size $d$. There are at most ${n \choose d} \le n^d$ such systems need to be checked, which is a polynomial in number because $d$ is a constant. Therefore, overall we can find an equilibrium in polynomial time, and the next theorem follows:

\begin{theorem}
Consider an Arrow-Debreu exchange market under Leontief utility functions in which the number of agents is a constant.
Then, in polynomial time we can determine if an equilibrium exists, and if so, we can find one. 
\end{theorem}

\section{Discussion}
\label{sec.discuss}

Is computing an equilibrium for a Fisher market under PLC utilities FIXP-hard? Clearly, the problem is in FIXP since Fisher markets are a subcase of
Arrow-Debreu markets. We believe that existing techniques, for example of \cite{VY} establishing hardness for Fisher markets under SPLC utilities via reduction from
Arrow-Debreu markets, will not work and new ideas are needed. 

In economics, uniqueness of equilibria plays an important role. In this vein, we ask what is the complexity of deciding if a PLC or Leontief market has more than one equilibria. We note that the reduction given in this paper blows up the number of equilibria and hence it will not answer this question in a straightforward manner.

As stated in the Introduction, several problems which have been shown to be in FIXP, are waiting for proofs of FIXP-hardness; we give two of them below. In this context, we believe our paper has given a new, and often easier, way of obtaining such results. So far, the only recourse was to reduce from 3-Nash. However, as was done in the current paper, one could reduce from 3-Nash to simultaneous multivariate polynomial equations, hence ensuring that the instances of the latter were positive, and then to the target problem. As was the case in the current paper, the latter problem was ``closer'', than 3-Nash, to the target problem. 

The problem of finding an
approximate equilibrium under CES utilities was shown to be PPAD-complete in \cite{CPY}. This paper also showed that exact equilibrium computation is in FIXP and left the open problem of showing FIXP-hardness.

The forty-plus year-old problem of determining the computational complexity of the Hylland-Zeckhauser scheme for one-sided matching markets~\cite{HZ79} was partially resolved recently in~\cite{VY2}. They gave an irrational example and a proof of membership of the problem in FIXP, and left the open problem of proving FIXP-hardness.   
\medskip
\medskip
\medskip
\medskip

\bibliographystyle{abbrv}
\bibliography{kelly,sigproc}
\end{document}